%The type of the document, only article for the moment:
\documentclass[]{article}

%% ------------------------------------------------ PREAMBLE ------------------------------------------------ %%

% ---- TITLE
\title{Technical Report: \\ Globally reasoning about localised security policies in distributed systems}\author{Alejandro Mario Hernandez\\ Department of Informatics, Technical University of Denmark \\ \texttt{aher@imm.dtu.dk}}
\date{}

\usepackage{amsmath,amssymb,enumerate,color,graphicx,subfig,breakurl,xspace}

\newtheorem{theorem}{Theorem}[section]
\newtheorem{lemma}[theorem]{Lemma}

\newcommand{\TrueB}{\textbf{t\!t}\xspace}
\newcommand{\FalseB}{\textbf{f\!f}\xspace}
\newcommand{\TrueS}{{\bf true}}
\newcommand{\FalseS}{{\bf false}}
\newcommand{\textif}{\ \IF\ }

\newcommand{\Four}{\textbf{Four}}
\newcommand{\Aspectbegin}{$$ \left[ \begin{array}{c}}
\newcommand{\Aspectend[1]}{\end{array} \right]#1$$}
\newcommand{\AspectEqbegin}{\left[ \begin{array}{c}}
\newcommand{\AspectEqend[1]}{\end{array} \right]#1}

\newcommand{\Localised}[3]{#1::^{#2}#3}
\newcommand{\LocalisedSubject}[2]{\Localised{\YZl_s}{#1}{#2}}
\newcommand{\LocalisedTarget}[2]{\Localised{\YZl_t}{#1}{#2}}
\newcommand{\w}{w}

\newcommand{\netpar}{\mid\mid}
\newcommand{\val}[1]{\langle\,#1\,\rangle}

\newcommand{\ppar}{\mid}

\newcommand{\OutM}[1]{{\bf out}({#1})}
\newcommand{\InM}[1]{{\bf in}({#1})}
\newcommand{\ReadM}[1]{{\bf read}({#1})}
\newcommand{\locate}[2]{{#1}\,@\,{\it #2}}

\newcommand{\nil}{{\bf 0}}

\newcommand{\YZl}{l}
\newcommand{\Fail}{{\sf fail}}

\newcommand{\Lnt}{\ell^\lambda}
\newcommand{\Ln}{\ell}

\newcommand{\Lc}{\ell}

\newcommand{\Lat}{\ell^\lambda}

\newcommand{\veck}[1]{\overrightarrow{#1}}

\newcommand{\doublebracketleft} {[\![}
\newcommand{\doublebracketright}{]\!]}
\newcommand{\db}[1]{\doublebracketleft #1 \doublebracketright}
\newcommand{\dt}[1]{[\!( #1 )\!]}

\newcommand{\grant}[1]{{\sf grant}({#1})}

\newcommand{\Inference}[2]{\begin{array}{@{}c@{}}#1\\[0em]\hline\\[-0.9em]#2\\
\end{array}}

\newcommand{\IF}{\underline{\sf if}\ }

%% ------------------------------------------------ DOCUMENT ------------------------------------------------ %%

% ---- TITLE
\begin{document}

\maketitle

\begin{abstract}
In this report, we aim at establishing proper ways for model checking the global security of distributed systems, which are designed consisting of set of localised security policies that enforce specific issues about the security expected.

The systems are formally specified following a syntax, defined in detail in this report, and their behaviour is clearly established by the Semantics, also defined in detail in this report. The systems include the formal attachment of security policies into their locations, whose intended interactions are trapped by the policies, aiming at taking access control decisions of the system, and the Semantics also takes care of this.

Using the Semantics, a Labelled Transition System (LTS) can be induced for every particular system, and over this LTS some model checking tasks could be done. We identify how this LTS is indeed obtained, and propose an alternative way of model checking the not-yet-induced LTS, by using the system design directly. This may lead to over-approximation thereby producing imprecise, though safe, results. We restrict ourselves to finite systems, in the sake of being certain about the decidability of the proposed method.

To illustrate the usefulness and validity of our proposal, we present 2 small case-study-like examples, where we show how the system can be specified, which policies could be added to it, and how to decide if the desired global security property is met.

Finally, an Appendix is given for digging deeply into how a tool for automatically performing this task is being built, including some implementation issues. The tool takes advantage of the proposed method, and given some system and some desired global security property, it safely (i.e. without false positives) ensures satisfaction of it.
\end{abstract}

\begin{section}{Introduction}
When developing distributed systems, security of the information travelling throughout them plays an important role. One must take care of how the information can be accessed in order to meet certain confidentiality requirements. If one can design such systems and formally prove that certain desired security requirements are met by the system, then one has a point even before starting with the implementation of the system itself.

The implementation of such distributed systems is then an issue, as it is supposed to meet the requirements achieved during the specification phase. However, if one can detect possible flaws as early in the development process as possible, those flaws could be more easily overcome. Therefore, a proper specification framework for dealing with security issues in distributed systems is necessary, and as closest as possible the framework can be to a realistic implementation way, helps in keeping things simple while translating from the verified specification to the actual implementation.

In this report, we present a framework with which it is possible to specify distributed systems in a certain level of abstraction. The interactions among the various locations of the system can be predicted and analysed. This allows us to intercept them and perform access control decisions when they are about to happen, in order to provide certain security features. For doing this, the framework provides the possibility of specifying access control security policies scattered among the various locations of the system, and then we aim at proving certain global security properties that might be desirable, which are indeed enforced thanks to the partial contribution done by the localised access control policies.

For achieving this, the combination of the several involved security policies in each possible interaction is needed, and there is a specific Logics that allows us to perform this combination in a consistent way. This is presented in the remainder of this Section.

Later, in Section \ref{sec:AspectKBL} we present the framework for modelling the distributed systems and their localised access control security policies. In Section \ref{sec:ACTLv} we present the Logics for analysing global properties over the framework. In Section \ref{sec:SmartModelChecking} we introduce an alternative approach for performing the model checking, making it faster than in the standard way of inducing the entire state space of the system. We describe some case-study-like examples in Section \ref{sec:examples}, to provide better insights on what the framework can be used for, and finally we conclude in Section \ref{sec:conclusion}. Moreover, there is an extra in Appendix \ref{sec:appendix}, which presents the implementation of a tool for performing the model checking.

\begin{subsection}{A review of Belnap Logics}
\label{subsec:BelnapSynopsis}
For granting access according to some security policy, the traditional boolean values (\TrueB\ and \FalseB) are enough: \TrueB\ grants while \FalseB\ denies access. However, for a distributed setting, where policies might be contradictory (or not sufficiently informative), those two values might not be enough. We shall consider an extension to the Boolean Logic proposed by Belnap~\cite{BelnapOrig}, which has been used for combining security policies~\cite{BrunsHuth08}.

In this extension to the boolean logic, two more values are considered: $\bot$ and $\top$ (read ``bottom'' and ``top''). The traditional \TrueB\ would mean ``the policy accepts the interaction'' whereas the traditional \FalseB\  would mean ``the policy does not accept the interaction''. Since different locations might aim to different security properties, their policies could be contradictory or they may lack information about some particular interaction. These situations can be represented by the two extra values that we have: $\bot$ meaning ``no decision'' and $\top$ meaning ``contradiction''.

With this set of values, which we will call here \textbf{Four} (i.e.~\textbf{Four} = \{$\bot, \TrueB, \FalseB, \top$\}), it is possible to extend the usual boolean operations ($\land$ and $\lor$) and to define new ones ($\otimes$ and $\oplus$). For obtaining that, the set \textbf{Four} is equipped with two partial orderings, say $\leq_k$ and $\leq_t$, as shown in Figure \ref{fig:BelnapFig}.

The usual boolean $\land$ is extended as computing the greatest lower bound in the $\leq_t$ lattice, and the usual $\lor$ as computing the least upper bound (thereby obtaining the same results as in boolean logic in case the operands belong to \{\TrueB, \FalseB\}). Analogously, the new operators over \textbf{Four} can be defined as computing the greatest lower bound (the $\otimes$ operator) and the least upper bound (the $\oplus$ operator), both in the $\leq_k$ lattice.\footnote{Notice that this could also be done by just extending the ``truth tables'' of the usual boolean operators and defining new ones for the new operators That would mean, however, having not just 2 truth tables with 4 cells each but 4 truth tables with 16 cells each, making difficult to remember for a human what each operator produces. Furthermore, it would also make it difficult to \emph{assess} the usefulness of new operators that might be defined.}

%%Belnap bilattice
%%%%%%%%%%%%%%%%%%%%%%%%%%%%%%%%%%%%%%%%%%%%%%%%%%%%%%%%%%%%%%%%%%%%%%%%%%%%%%%
\begin{figure}[t]
\begin{center}
\begin{picture}(230,110)
\linethickness{1pt}
\put(40,90){\makebox(20,15){$\top$}}
\put(0,50){\makebox(20,15){\TrueB}}
\put(80,50){\makebox(20,15){\FalseB}}
\put(40,10){\makebox(20,15){$\bot$}}
\put(40,50){\makebox(20,15){$\leq_k$}}
\put(50,22){\line(1,1){35}}
\put(50,22){\line(-1,1){35}}
\put(50,92){\line(1,-1){35}}
\put(50,92){\line(-1,-1){35}}
\put(160,90){\makebox(20,15){\TrueB}}
\put(120,50){\makebox(20,15){$\bot$}}
\put(200,50){\makebox(20,15){$\top$}}
\put(160,10){\makebox(20,15){\FalseB}}
\put(160,50){\makebox(20,15){$\leq_t$}}
\put(170,22){\line(1,1){35}}
\put(170,22){\line(-1,1){35}}
\put(170,92){\line(1,-1){35}}
\put(170,92){\line(-1,-1){35}}
\end{picture}
\end{center}
\caption{The Belnap bilattice {\bf Four}: $\leq_k$ and $\leq_t$.} \label{fig:BelnapFig}
\end{figure}
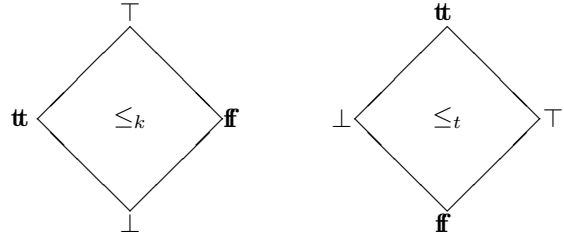

The negation operator $\lnot$ is extended by leaving the two new values unchanged (i.e.~$\lnot \bot = \bot$ and $\lnot \top = \top$), and the implication $\Rightarrow$ is extended as follows:
$$
\begin{array}{lr}
p_1 \Rightarrow p_2 =
\left\{
\begin{array}{ll}
p_2 & \text{if } p_1 \leq_k \TrueB \\
\TrueB & \text{otherwise}
\end{array}
\right.
&
\forall p_1, p_2 \in \textbf{Four}
\end{array}
$$
Another useful operator is the priority $>$, which returns the first operand unless it is $\bot$, in which case it returns the second operand. This would always consider what the first operand suggests unless it has no decision, in which case the second operand is considered.

\begin{subsubsection}{Power of Belnap Logic}
\label{subsubsec:PowerBelnap}
As discussed in several places\cite{BrunsHuth08}\cite{HankinNielsonNielson09}, Belnap Logic is enough for dealing with conflicting policies, thereby is proper for our intended use of combining several policies occurring in different locations of a distributed system.

The decision we shall take is that in order to allow an interaction, no policy should recommend that the interaction should be denied, otherwise we just consider the \emph{negative} policy and deny the interaction. This follows a conservative approach, commonly named as \emph{liberal}, since an interaction is denied as long as there is one policy that suggests so. The proper Belnap operator for combination of policies which could indeed be used for achieving this aim is $\oplus$. Below this shall be clearer.

Another issue that is worth mentioning about Belnap Logic is that it is not capable to express \emph{voting}. Indeed, if one aims at some combination of policies where no conservative approach (or some other approaches with which Belnap Logic could cope) is taken, but some voting among the involved policies is intended, aiming at taking the decision most of the policies agree on, then Belnap Logics is not the right choice.
\end{subsubsection}

\end{subsection}

\end{section}

\begin{section}{A language for distributed security policies}
\label{sec:AspectKBL}
In this Section, a formal language for specifying distributed system is presented. It is based on a subset of the Coordination \cite{coordination} language KLAIM \cite{Klaim}, and it follows a Process Algebraic
approach. Furthermore, the language, named \textbf{AspectKBL}, has primitives for adding localised security policies \cite{GoguenMeseguer82} into the various locations of the distributed system, aiming at providing access control \cite{Sandhu93} decisions based on them.

We will present the \textbf{AspectKBL} language by giving its Syntax and Semantics, and intuitively explaining what can be obtained by using them. The security policies that can be attached to the locations follow also a formal Syntax, which will of course be given as well, and moreover they are then considered by the Semantics, as they are the ones that would either allow the system to progress or not.

In the next Subsections, the Syntax and Semantics follow, and also a small example of a distributed system.

\begin{subsection}{Syntax}
\label{subsec:AspectKBL_Syntax}
The syntax of \textbf{AspectKBL} is given in Tables \ref{tab:syntax_AspKB_NetProcess} and \ref{tab:syntax_AspKB_Policies}, the former being the way of describing basic system locations and the latter the way of describing the attached security policies.

%%Syntax of Processes and Nets
%%%%%%%%%%%%%%%%%%%%%%%%%%%%%%%%%%%%%%%%%%%%%%%%%%%%%%%%%%%%%%%%%%%%%%%%%%%%%%%
\begin{table}[t]
$$
\begin{array}{l@{\quad}r@{\quad::=\quad}l}

N  \in  {\bf Net}		&	N
	&	N_1\netpar N_2\ \mid\ \Localised{l}{pol}{P}
			\mid\ \Localised{l}{pol}{\val{\veck{\YZl}}}
\\
P  \in  {\bf Proc}		&	P
	&	P_1\ppar P_2\ \mid\ \sum_i a_i.P_i\ \mid\ *P
\\[.5ex]
a  \in  {\bf Act}		&	a
	&	\locate{\OutM{\veck{\ell}}}{\ell}\
			\mid\ \locate{\InM{\veck{\ell^{\lambda}}}}{\ell}
			\mid\ \locate{\ReadM{\veck{\ell^\lambda}}}{\ell}
\\
  {
  c  \in  {\bf Cap} } & {
  c } & {
  \mathbf{out} \mid \mathbf{in} \mid \mathbf{read} }
  \\

\Ln,\Lnt \ \in \ {\bf Loc}	&	\Ln
	&	u \  \mid\ \YZl	\hfill		\Lnt\  ::= \  \Ln \mid\ !u

\end{array}
$$
\caption{{\bf AspectKBL} Syntax -- Nets, Processes, Actions and States.}
\label{tab:syntax_AspKB_NetProcess}
\end{table}
%%Syntax of Aspects
%%%%%%%%%%%%%%%%%%%%%%%%%%%%%%%%%%%%%%%%%%%%%%%%%%%%%%%%%%%%%%%%%%%%%%%%%%%%%%%%%
\begin{table*}[t]
$$
\begin{array}{lr@{\quad::=\quad}l}

pol \in {\bf Pol}		&	pol
	&	asp \mid \neg pol \mid pol \oplus pol \mid pol \otimes pol
			\mid pol \Rightarrow pol \mid
\\
				&	\multicolumn{1}{l}{} %this multicolumn is to avoid printing the ::= in this line
	&	pol > pol \mid pol \wedge pol \mid pol \vee pol \mid \TrueS \mid \FalseS
\\
asp \in {\bf Asp}		&	asp
	&	[rec\ \IF\ cut : cond ]
\\
cut \in {\bf Cut}		&	cut
	&	\Lc :: a^t\,.\, X
\\
a^t \in {\bf Act}^t	&	a^t
	&	\locate{\OutM{\veck{\ell^t}}}{\ell}\
			\mid\ \locate{\InM{\veck{\ell^{t\lambda}}}}{\ell}\
			\mid\ \locate{\ReadM{\veck{\ell^{t\lambda}}}}{\ell}
\\
rec \in {\bf Rec}		&	rec
	&	\ell_1 = \ell_2 \mid \neg rec \mid rec  \oplus  rec  \mid rec  \otimes  rec
			\mid rec \wedge rec \mid
\\
				&	\multicolumn{1}{l}{} %this multicolumn is to avoid printing the ::= in this line
	&	rec \vee rec \mid rec \Rightarrow rec \mid \TrueS
			\mid \FalseS \mid a\ \hbox{\bf occurs-in}\ X
\\
cond \in {\bf Cond}	&	cond
	&	\ell_1 = \ell_2 \mid \neg cond \mid cond_1  \land  cond_2
			\mid cond_1  \lor  cond_2 \mid
\\
				&	\multicolumn{1}{l}{} %this multicolumn is to avoid printing the ::= in this line
	&	\TrueS \mid \FalseS \mid\ a\ \hbox{\bf occurs-in}\ X
\\
				&	\ell ^t
	& \ell \mid \_		\hfill		\ell^{t\lambda}\  ::= \  \Ln^{\lambda} \mid\ \_	\hfill

\end{array}
$$
\caption{{\bf AspectKBL} Syntax - Aspects for Security Policies.}
\label{tab:syntax_AspKB_Policies}
\end{table*}

A system consists of a network which consists of a set of parallel locations, each of which has either a process running or some data on it.
A process is a parallel composition of processes, a choice among a set of processes that follow some action or a replicated process, meaning an arbitrary number of the same.
An action can be an \textbf{out}, which writes some data to the \emph{target location} (the one after the `@' symbol), or it can be an \textbf{in} or a \textbf{read}, both of which gather data from the target location, differing in that the former deletes the data from the source whereas the latter does not.
For notation purposes, we use the curly $\ell$ for unknown locations (although it will never be used for expressing actual networks) and the italic $l$ for fixed ones. $u$ refers to variables that are bound after reading from a target, and then if they are used for expressing actual networks they will be bound at runtime. Non-determinism might be introduced both by writing a choice among a set of processes and also by reading from a target that contains several pieces of data (and binding a variable while doing so).

Attached to each location of the system there is some security policy, which can be either a four-valued Belnap Logic combination of policies or a single aspect. This latter consists of some pointcut $cut$, which may at runtime trap some of the interactions the location may be involved into. Furthermore, there is a boolean applicability condition $cond$ that determines for a given trapped interaction whether the aspect should actually be applied. Finally, there is a four-valued Belnap Logic advice $rec$ (for \emph{recommendation}) that tells if the trapped interaction under the given condition should be actually allowed. The operator \textbf{occurs-in}, used for analysing the future behaviour of the trapped process, is defined in Table \ref{tab:semantics_occurs-in}.

\begin{table}[t]
$$
\begin{array}{r@{\ = \ }l}

% ---- OCCURS-IN
a\ \textbf{occurs-in}\ (P_1 \ppar P_2)
	&	(a\ \textbf{occurs-in}\ P_1) \vee (a\ \textbf{occurs-in}\ P_2)
\\
a\ \textbf{occurs-in}\ (\sum_i a_i.P_i)
	&	\bigvee_i(a\ {matches}\ a_i \vee a\ \textbf{occurs-in}\ P_i)
\\
a\ \textbf{occurs-in}\ (*P)
	&	a\ \textbf{occurs-in}\ P

\end{array}
$$
\caption{Continuation analysis operator \textbf{occurs-in}.}
\label{tab:semantics_occurs-in}
\end{table}

\end{subsection}

\begin{subsection}{Semantics}
\label{subsec:AspectKBL_Semantics}
The semantics is given by a reduction relation on nets whose rules are given in Table \ref{tab:semantics_AspKBL_Reaction_Rules}. It makes use of a structural congruence relation on nets, consisting of the usual congruence rules besides those given in Table \ref{tab:semantics_AspKB_Structural_Congruence}. It also makes use of a $match$ operator, defined in Table \ref{tab:semantics_pattern_matching_match_vectors}, for matching input patterns to actual data.

%%Semantics
%%%%%%%%%%%%%%%%%%%%%%%%%%%%%%%%%%%%%%%%%%%%%%%%%%%%%%%%%%%%%%%%%%%%%%%%%%%%%%%%%&
\begin{table*}[t]
$$
\begin{array}{l}

% ---- Net In General
\multicolumn{1}{c}{
	\Inference
		{N_1 \rightarrow^{lab} N_1'}
		{N_1 \netpar N_2 \rightarrow^{lab}  N_1'\netpar N_2}
	\qquad\qquad\qquad
	\Inference
		{N \equiv M \quad M \rightarrow^{lab} M' \quad M' \equiv N'}
		{N \rightarrow^{lab} N'}
	}

\\[3.5ex]

% ---- case: READ
(\LocalisedSubject{pol_s}{\mathbf{read}(\veck\Lat)@l_t.P+ \cdots}) \netpar
	(\LocalisedTarget{pol_t}{\langle \veck\YZl \rangle})
\\	\qquad
	\begin{array}{cl}%case: READ
	\rightarrow^{l_s:\textbf{r}(\textcolor{black}{\veck{l}})@l_t}
		&
		\LocalisedSubject{pol_s}{P\theta} \netpar
			\LocalisedTarget{pol_t}{\langle \veck\YZl \rangle}
	\\[1ex]%[2ex]
	\multicolumn{2}{l}{\textrm{if } \grant{\db{pol_s \oplus pol_t}
		(l_s::\mathbf{read}(\veck\Lat)@l_t.P)}}
	\\
	\multicolumn{2}{l}{\textrm{and } match(\veck\Lat;\veck\YZl)= \theta;}
		
	\end{array}%case: READ

\\[7.5ex]

% ---- case: IN
(\LocalisedSubject{pol_s}{\mathbf{in}(\veck\Lat)@l_t.P+ \cdots}) \netpar
	(\LocalisedTarget{pol_t}{\langle \veck\YZl \rangle})
\\	\qquad
	\begin{array}{cl}%case: IN
	\rightarrow^{l_s:\textbf{i}(\textcolor{black}{\veck{l}})@l_t}
		&
		\LocalisedSubject{pol_s}{P\theta}
	\\[1ex]%[2ex]
	\multicolumn{2}{l}{\textrm{if } \grant{\db{pol_s \oplus pol_t}
		(l_s::\mathbf{in}(\veck\Lat)@l_t.P)}}
	\\
	\multicolumn{2}{l}{\textrm{and } match(\veck\Lat;\veck\YZl)= \theta;}
	\end{array}%case: IN

\\[7.5ex]

% ---- case: OUT
(\LocalisedSubject{pol_s}{\mathbf{out}(\veck\YZl)@l_t.P+\cdots}) \netpar
	(\LocalisedTarget{pol_t}{Q})
\\	\qquad
	\begin{array}{cl}%case: OUT
	\rightarrow^{l_s:\textbf{o}(\veck{l})@l_t}
		&
		\LocalisedSubject{pol_s}{P} \netpar \LocalisedTarget{pol_t}{\langle \veck\YZl \rangle}
			\netpar \LocalisedTarget{pol_t}{Q}
	\\[1ex]%[2ex]
	\multicolumn{2}{l}{\textrm{if } \grant{\db{pol_s \oplus pol_t}(l_s::\mathbf{out}
		(\veck\YZl)@l_t.P)};}
	\end{array}%case: OUT

\end{array}
$$
\caption{Reaction Semantics of {\bf AspectKBL} .}
\label{tab:semantics_AspKBL_Reaction_Rules}
\end{table*}
%%Structural equivalence
%%%%%%%%%%%%%%%%%%%%%%%%%%%%%%%%%%%%%%%%%%%%%%%%%%%%%%%%%%%%%%%%%%%%%%%%%%%%%%%%%
\begin{table}[t]
$$
\begin{array}{lr}

\begin{array}{r@{\quad\equiv\quad}l}
l::^{pol} P_1 \ppar P_2	&	l::^{pol}P_1 \netpar l::^{pol}P_2
\\[1ex]
l ::^{pol}\ * P			&	l ::^{pol}\ P \ppar\ * P
\\[1ex]
l ::^{pol}\ P			&	l ::^{pol} P \netpar l ::^{pol} {\bf 0}
\end{array}

&	\hspace{2cm}

\begin{array}{rcl}
\multicolumn{3}{c}
	{
	\Inference
		{N_1 \equiv N_2}
		{N \netpar N_1 \equiv N \netpar N_2}
	}
\end{array}

\end{array}
$$
\caption{Structural Congruence.}
\label{tab:semantics_AspKB_Structural_Congruence}
\end{table}
%%match
%%%%%%%%%%%%%%%%%%%%%%%%%%%%%%%%%%    %%%%%%%%%%%%%%%%%%%%%%%%%%%%%%%%%    %%%%%%%%%%%%%%%
\begin{table}[t]
$$
\begin{array}{r@{\quad =\quad }l}
match(\,!u,\veck{\ell^\lambda}\,;\,l,\veck{l}\,)
	&	[l/u] \circ match(\,\veck{\ell^\lambda}\,;\,\veck{l}\,)
\\
match(\,l,\veck{\ell^\lambda}\,;\,l,\veck{l}\,)
	&	match(\,\veck{\ell^\lambda}\,;\,\veck{l}\,)
\\
match(\,\epsilon\,;\,\epsilon\,)
	&	id 
\\
match(\,\cdot\,;\,\cdot\,)
	&	\Fail \qquad\textrm{ otherwise}
\end{array}
$$
\caption{Matching Input Patterns to Data.}
\label{tab:semantics_pattern_matching_match_vectors}
\end{table}

The rules should be straightforwardly understood since they follow the traditional pattern for process calculi, in this case also considering the security policies attached to the locations involved in the interaction. For this purpose, each rule only defines a transition if the policies agree on allowing the interaction to take place. That is the purpose of calling the auxiliary function \grant{}, with the four-valued Belnap Logic combination of the involved policies and the intended action, for turning the four-valued policies' recommendations into an actual boolean decision. This is to be further explained in the Subsection \ref{subsec:AspectKBL_MeaningPolicies}.

When the interaction is actually allowed by the policies, the data is either written to the target location (if the action is an \textbf{out}) or gathered from there (if the action is an \textbf{in} or \textbf{read}, deleting the data in the former case). In the case of reading some data, the continuation process is substituted with it, according to the result of the function $match$ (if there is such result, otherwise the interaction cannot take place). In the case of writing some data, the target location is annotated with the security policy previously existent on it, that is why for the interaction to take place some process running there must be pattern matched.

The syntax of the labels is given by $l:c(\veck{l})@l$, where $c \in$ \{ \textbf{r}, \textbf{i}, \textbf{o} \} (the set of capabilities). This means that each transition will be labelled by the actual parameter bound.

Finally, we may notice that the Semantics of Table \ref{tab:semantics_AspKBL_Reaction_Rules} induces a Labelled Transition System (LTS). In such LTS, a denied operation (by the function \grant) does not occur at all. Indeed, the Semantics does not define any transition in the LTS for a denied operation as there is no rule that might say so.
\end{subsection}

\begin{subsection}{Granting access according to policies}
\label{subsec:AspectKBL_MeaningPolicies}

In the reaction rules of Table \ref{tab:semantics_AspKBL_Reaction_Rules}, the process is either terminated (being replaced by a \nil) or continued (writing a data to some location or being substituted with some read data), and this is done according to the result of the function \grant{}.

The function is just a way to map a four-valued Belnap Logic result into a boolean decision, to actually allow the interaction or not. The approach taken is to map to $\TrueB$ in the cases where the four-valued parameter is $\TrueB \in \Four$ or $\bot \in \Four$. The former is obvious, while the latter aims to allow an interaction that is not explicitly forbidden by any policy. If the four-valued parameter is either $\FalseB \in \Four$ or $\top \in \Four$, the result of the function \grant{} will be $\FalseB$, thereby denying the interaction as long as there is some policy that suggests so. This is inlined with the already-mentioned liberal approach, and that is also why the operator used for combining the policies coming from the source and from the target location in all the rules of Table \ref{tab:semantics_AspKBL_Reaction_Rules} is the operator $\oplus$. The definition of the \grant{} function in terms of four-valued Belnap Logic operand is:
$$
\grant{f} = f \leq_k \TrueB.
$$

Moreover, in the reaction rules of Table \ref{tab:semantics_AspKBL_Reaction_Rules}, the function \grant{} is called with an argument equal to the combination of the involved policies and also the intended action, whose result will be produced according to Table \ref{tab:semantics_AspKB_Policies_meaning}. The combination of the policies is done using the four-valued Belnap Logic operator $\oplus$, and this aims to produce a value in the set $\{ \top, \FalseB \}$ as long as some of the policies belongs to that very same set. This follows the same principle discussed in the previous paragraph, denying access as long as some policy suggests so. The intended action is also passed and the purpose is to check whether the involved policies actually trap that action (preformed by the function $check$ and $extract$, explained below), and that the applicability condition is met, otherwise the action should not be denied (this is achieved due to the $\bot$ returned by $\doublebracketleft . \doublebracketright$).

%%meaning of policies
%%%%%%%%%%%%%%%%%%%%%%%%%%%%%%%%%%%%%%%%%%%%%%%%%%%%%%%%%%%%%%%%%%%%%%%%%%%%%%%%%
\begin{table*}[t]
$$
\begin{array}{l}

\db{[rec\ \IF\ cut: cond]}(l :: a\,.\, P)	\quad = \quad%single asp

\\	\qquad

\left(
	\begin{array}[c]{l}%definition case single asp

	\hbox{case $check(\,extract(cut)\,;\,extract(l :: a\,.\, P))$ of}\

	\\	\qquad\qquad

		\begin{array}[t]{ll}%cases check
		\Fail:		&
			\bot
		\\
		\theta:	&
			\left\{
				\begin{array}{l@{\quad\textrm{if }}l}%cases \theta
				\dt{(rec\ \theta)}	&	\dt{cond\ \theta}
				\\
				\bot					&	\neg \dt{cond\ \theta}
				\end{array}%cases \theta
			\right.
		\end{array}%cases check

	\end{array}%definition case single asp
\right)

\\[6ex]

\db{\neg pol}(N)	\quad = \quad
	\neg(\db{pol}(N))

\\

\db{pol_1\ \phi\ pol_2}(N)		\quad = \quad
	(\db{pol_1}(N))\ \phi\ (\db{pol_2}(N)), \
		(\phi \in \{\oplus,\otimes,\Rightarrow,>,\wedge,\vee\})

\\

\db{\TrueS}(N)				\quad = \quad
	\TrueB

\\

\db{\FalseS}(N)			\quad = \quad
	\FalseB

\end{array}
$$
\caption{Meaning of Policies in {\bf Pol} for {\bf AspectKBL}.}
\label{tab:semantics_AspKB_Policies_meaning}
\end{table*}

The function $check$ determines whether there is a substitution $\theta$ that can be performed in the $cut$ that matches the given argument, and it could be straightforwardly defined by induction on its arguments. The function $extract$ facilitates function $check$'s task by producing the list of literals that occur in a given syntactic construction in a way that, for instance, $extract(\ell :: \mathbf{out}(\ell_1^t,\cdots,\ell_n^t)@\ell'.X) = [\ell,\mathbf{out},\ell_1^t,\cdots,\ell_n^t,\ell',X]$, which is done by just pattern matching the components of the given parameter and then pushing them into a list.
\end{subsection}

\begin{subsection}{Example of network}
\label{subsec:examptinynetwork}
Let us discuss a tiny example to understand how it is written, and how the Semantics makes it evolve.

\begin{subsubsection}{Syntactic description}
\label{subsub:tinynetworkwithoutpolicies}
Assume that in a given Hospital we have a Health Care System where there is a centralised data base, named EHDB (for Electronic Health Data Base), with some information about some patients. In this case, let us assume there is one tuple (piece of data) regarding Alice, and that the tuple specifies a given Care Plan for her. Besides, there is another tuple regarding Bob, and it is related to some Private Notes some Doctor might have taken about him. This can be written in \textbf{AspectKBL} as follows:
$$
\begin{array}{l}
NetData =
\\
EHDB::^{pol_{EHDB}}<\texttt{Alice, CarePlan, alicetext}> \netpar
\\
EHDB::^{pol_{EHDB}}<\texttt{Bob, PrivateNotes, bobtext}>
\end{array}
$$
Note that, according to the syntax, although both tuples are in the same location, it must be written explicitly both times, since there is no Structural Congruence equivalence that allows to write it just once (as, for instance, the first equivalence in Table \ref{tab:semantics_AspKB_Structural_Congruence} for Processes). Note also, that each of the occurrences of the location has a security policy $pol_{EHDB}$ attached. Let us discuss later how to define the policy.

Assume now that there is also another location with information about the staff of the Hospital, which could be defined in the following way:
$$
\begin{array}{l}
NetRoles =
\\
ROLES::^{pol_{ROLE}}<\texttt{Doctor, Hansen}> \netpar
\\
ROLES::^{pol_{ROLE}}<\texttt{Nurse, Olsen}>
\end{array}
$$

Now, assume that both employees have some location, and there is a Process running on each of them. The Doctor Hansen might try to read patient Bob's information, and then leak it to the Nurse Olsen. On her side, Nurse Olsen might also try to read Bob's information directly. This could be defined as follows:
$$
\begin{array}{l}
NetHansen =
\\
Hansen::^{pol_{defaultDr}}
\\
\quad \textbf{read}(\texttt{Bob, PrivateNotes}, !content)@EHDB .
\\
\quad \textbf{out}(\texttt{Bob, PrivateNotes}, content)@Olsen .
\\
\quad \nil
\end{array}
$$
$$
\begin{array}{l}
NetOlsen =
\\
Olsen::^{\TrueS}
\\
\quad \textbf{read}(\texttt{Bob, PrivateNotes}, !content)@EHDB .
\\
\quad \nil
\end{array}
$$

Finally, the entire network could be defined using the previous definitions as follows:
\begin{equation}
\label{eq:netexample}
NetData \netpar NetRoles \netpar NetHansen \netpar NetOlsen
\end{equation}

Now, let us assume for the moment that there is no policy at all attached to any location (or, strictly speaking, there is a trivial policy, that always allows any interaction). According to what he have written, this could be directly obtained by defining:
$$
pol_{EHDB} = pol_{ROLE} = pol_{defaultDr} = \TrueS
$$
\end{subsubsection}

\begin{subsubsection}{Semantics evolving}
\label{subsubsec:Semantics_example}
Given the network for the Hospital we could obtain, for instance, the following possible path the network might follow, according to the Semantics:
$$
\begin{array}{ll}
& NetData \netpar NetRoles \netpar NetHansen \netpar NetOlsen
\\
\multicolumn{2}{l}{\rightarrow^{Hansen:\textbf{r}(\texttt{Bob, PrivateNotes, bobtext})@EHDB}}
\\
& NetData \netpar NetRoles \netpar NetOlsen \netpar
\\ & \quad Hansen::^{pol_{defaultDr}}\textbf{out}(\texttt{Bob, PrivateNotes, bobtext})@Olsen . \nil
\\
\multicolumn{2}{l}{\rightarrow^{Hansen:\textbf{o}(\texttt{Bob, PrivateNotes, bobtext})@Olsen}}
\\
& NetData \netpar NetRoles \netpar NetOlsen \netpar
\\ & \quad Hansen::^{pol_{defaultDr}} \nil \netpar
\\ & \quad Olsen::^{\TrueS}<\texttt{Bob, PrivateNotes, bobtext}>
\\
\multicolumn{2}{l}{\rightarrow^{Olsen:\textbf{r}(\texttt{Bob, PrivateNotes, bobtext})@Olsen}}
\\
& NetData \netpar NetRoles \netpar Olsen::^{\TrueS} \nil \netpar
\\ & \quad Hansen::^{pol_{defaultDr}} \nil \netpar
\\ & \quad Olsen::^{\TrueS}<\texttt{Bob, PrivateNotes, bobtext}>
\end{array}
$$

That is the path that is followed in case of both actions from the process in location $Hansen$ take place before the only action in location $Olsen$. Other two paths are possible if the interleaving of the actions is different:
$$
\begin{array}{ll}
& NetData \netpar NetRoles \netpar NetHansen \netpar NetOlsen
\\
\multicolumn{2}{l}{\rightarrow^{Hansen:\textbf{r}(\texttt{Bob, PrivateNotes, bobtext})@EHDB}}
\\
& NetData \netpar NetRoles \netpar NetOlsen \netpar
\\ & \quad Hansen::^{pol_{defaultDr}}\textbf{out}(\texttt{Bob, PrivateNotes, bobtext})@Olsen . \nil
\\
\multicolumn{2}{l}{\rightarrow^{Olsen:\textbf{r}(\texttt{Bob, PrivateNotes, bobtext})@Olsen}}
\\
& NetData \netpar NetRoles \netpar Olsen::^{\TrueS} \nil \netpar
\\ & \quad Hansen::^{pol_{defaultDr}}\textbf{out}(\texttt{Bob, PrivateNotes, bobtext})@Olsen . \nil
\\
\multicolumn{2}{l}{\rightarrow^{Hansen:\textbf{o}(\texttt{Bob, PrivateNotes, bobtext})@Olsen}}
\\
& NetData \netpar NetRoles \netpar Olsen::^{\TrueS} \nil \netpar
\\ & \quad Hansen::^{pol_{defaultDr}} \nil \netpar
\\ & \quad Olsen::^{\TrueS}<\texttt{Bob, PrivateNotes, bobtext}>
\end{array}
$$
and
$$
\begin{array}{ll}
& NetData \netpar NetRoles \netpar NetHansen \netpar NetOlsen
\\
\multicolumn{2}{l}{\rightarrow^{Olsen:\textbf{r}(\texttt{Bob, PrivateNotes, bobtext})@Olsen}}
\\
& NetData \netpar NetRoles \netpar NetHansen \netpar Olsen::^{\TrueS} \nil
\\
\multicolumn{2}{l}{\rightarrow^{Hansen:\textbf{r}(\texttt{Bob, PrivateNotes, bobtext})@EHDB}}
\\
& NetData \netpar NetRoles \netpar Olsen::^{\TrueS} \nil \netpar
\\ & \quad Hansen::^{pol_{defaultDr}}\textbf{out}(\texttt{Bob, PrivateNotes, bobtext})@Olsen . \nil
\\
\multicolumn{2}{l}{\rightarrow^{Hansen:\textbf{o}(\texttt{Bob, PrivateNotes, bobtext})@Olsen}}
\\
& NetData \netpar NetRoles \netpar Olsen::^{\TrueS} \nil \netpar
\\ & \quad Hansen::^{pol_{defaultDr}} \nil \netpar
\\ & \quad Olsen::^{\TrueS}<\texttt{Bob, PrivateNotes, bobtext}>
\end{array}
$$

These three paths are those occurring in the LTS induced by the Semantics of \textbf{AspectKBL}. In more complex networks, the LTS will be much more complex, consisting of many more paths, and much longer.
\end{subsubsection}

\begin{subsubsection}{With policies}
\label{subsub:tinynetworkwithpolicies}
Now, let us assume we do not want that Private Notes from any patient can be obtained by any Nurse, nor even given to them by any Doctor. Then, we could replace the trivial access control security policies that we put in the location for the following ones:

$$
\begin{array}{l}
pol_{EHDB} =
\Aspectbegin
\textbf{test}(\texttt{Doctor}, \#u)@ROLES
\\
\textif \#u::\textbf{read}(-,\texttt{PrivateNotes},-)@EHDB :
\\
\TrueS
\Aspectend[]
\end{array}
$$

$$
\begin{array}{l}
pol_{defaultDr} = 
\Aspectbegin
\textbf{test}(\texttt{Doctor}, \#target)@ROLES
\\
\textif \#u::\textbf{out}(-,\texttt{PrivateNotes},-)@\#target :
\\
\lnot (\#target = EHDB)
\Aspectend[]
\end{array}
$$

The first policy is the one attached to location $EHDB$, and it will avoid actions such as the one in the process of location $Olsen$. The second policy is the one attached to location $Hansen$ (and it could besides be attached to any other location that represents a Doctor), and it will avoid actions such as the second one in the process of location $Hansen$. Indeed, now the only possible path in the LTS induced by the Semantics would be:

$$
\begin{array}{ll}
& NetData \netpar NetRoles \netpar NetHansen \netpar NetOlsen
\\
\multicolumn{2}{l}{\rightarrow^{Hansen:\textbf{r}(\texttt{Bob, PrivateNotes, bobtext})@EHDB}}
\\
& NetData \netpar NetRoles \netpar NetOlsen \netpar
\\ & \quad Hansen::^{pol_{defaultDr}}\textbf{out}(\texttt{Bob, PrivateNotes, bobtext})@Olsen . \nil
\end{array}
$$

From that network state, there is no possible transition according to the Semantics. Indeed, if the policies were not there, there would be 2 possible actions that could execute, corresponding to the first two paths shown in Sub-subsection \ref{subsubsec:Semantics_example}. Ergo, the resulting LTS obtained when a network also consists of some security policies, could be interpreted as a kind of ``pruned'' LTS.

We could make the example a bit more complex by adding for instance a location $Administrator$ with a process that will change the position of employee Olsen, like this:
$$
\begin{array}{l}
NetUpgradeOlsen =
\\
Administrator::^{\TrueS}
\\
\quad \textbf{in}(\texttt{Nurse, Olsen})@ROLES .
\\
\quad \textbf{out}(\texttt{Doctor, Olsen})@ROLES .
\\
\quad \nil
\end{array}
$$
and by re-defining the entire network to be:
$$
NetData \netpar NetRoles \netpar NetHansen \netpar NetOlsen \netpar NetUpgradeOlsen
$$

Now, according to the specific interleaving taken, if that action is executed before the actions that would be denied in the previous case, then the policies would allow them, otherwise they will still be denied. For avoiding ``hackers'' to make such change in the $ROLES$ database, we could re-define the following policy:

$$
pol_{ROLES} = pol_{ROLES1} \oplus pol_{ROLES2}
$$
where
$$
\begin{array}{l}
pol_{ROLES1} = 
\Aspectbegin
\#u = Administrator
\\
\textif \#u::\textbf{in}(-,-)@ROLES :
\\
\TrueS
\Aspectend[]
\end{array}
$$
and
$$
\begin{array}{l}
pol_{ROLES2} = 
\Aspectbegin
\#u = Administrator
\\
\textif \#u::\textbf{out}(-,-)@ROLES :
\\
\TrueS
\Aspectend[]
\end{array}
$$
\end{subsubsection}

\end{subsection}

\begin{subsection}{Generic ``pruned'' LTS}
\label{subsec:prunedLTS}
In the previous Subsection we have seen a tiny example of how an LTS is induced by the Semantics, and we have noticed that if a network contains security policies then the induced LTS could be understood as a kind of ``pruned'' LTS, with shorter paths and perhaps fewer ones. Certainly, this pruned LTS is the actual one induced by the Semantics, so in the general case we could think that we are in a situation like the one in Figure \ref{fig:prunedLTS}:

\begin{figure}[ht]
  \centering
  \includegraphics[width=1\textwidth]{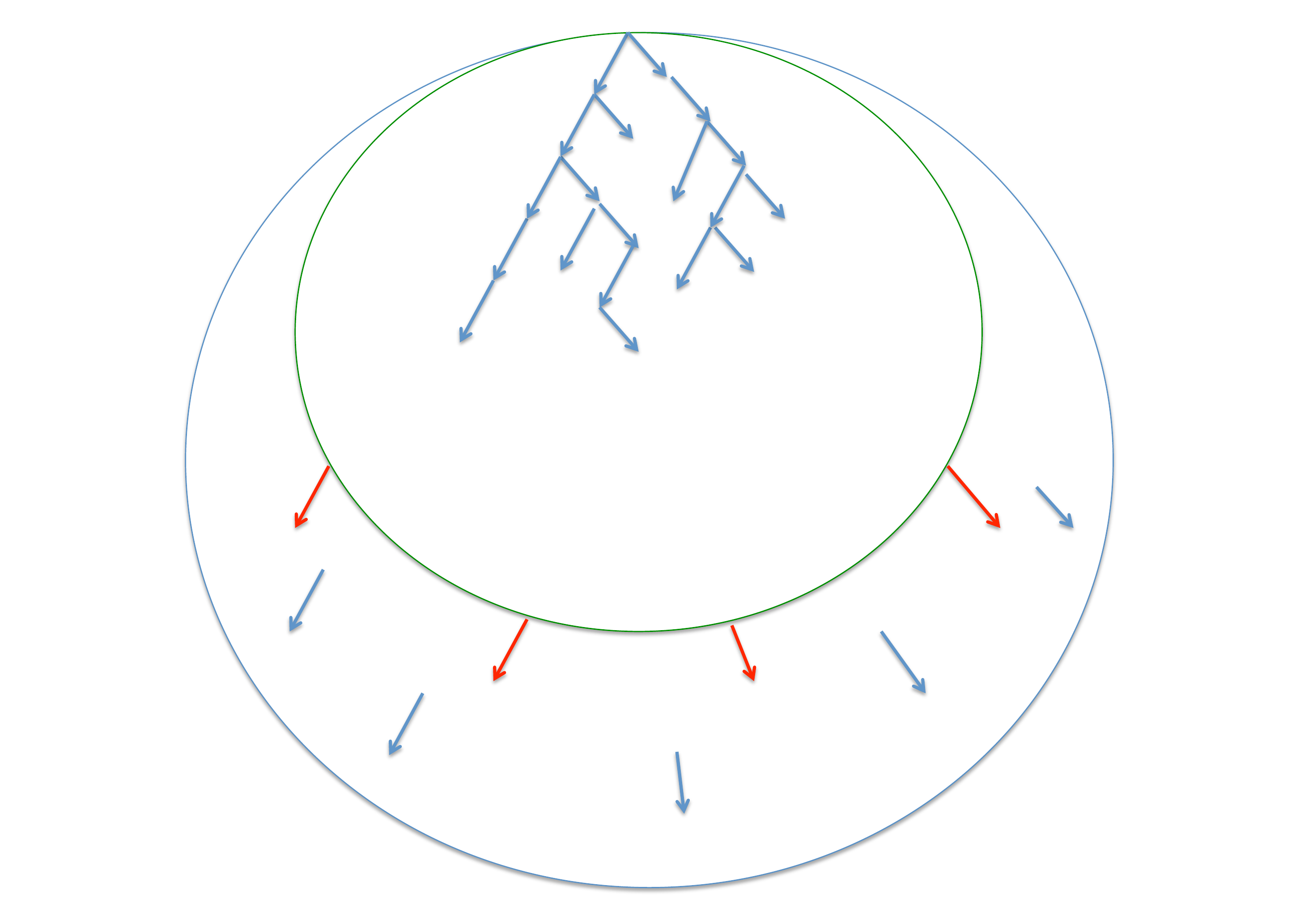}
  \caption{Generic pruned LTS.}
\label{fig:prunedLTS}
\end{figure}

Then, the LTS that would be induced if no security policies is different than the trivial one, is the outermost. However, the actual LTS we must be dealing with is the innermost, because the actions that might have happened in the ``border'' are those that are actually denied by some policy.
\end{subsection}

\end{section}

\begin{section}{A Logics for analysing Global Properties}
\label{sec:ACTLv}
Having defined the language for describing networks and localised security policies over them, we shall proceed on devising a technique for analysing the networks actually described using this language.

What we expect to have is a Logics for expressing the desired global security property of our network, and a way to check if the property is actually met by the network, considering the existing localised policies that we have attached. We approach the problem by defining a variant of the temporal logics ACTL \cite{ACTL} giving its syntax and semantics, and then we observe some properties useful for the later model checking of it.

\begin{subsection}{Defining the Logics}
\label{subsec:ACTLv_SyntaxSemantics}

We expect to describe useful desired global security properties, so let us assess what exactly might be a useful property to be described. As for global, what we need to establish is something that happens always, the system should always be secure in the sense of the property we might expect. As for security, what we need to establish is something that happens whenever some security threat might arise, the system should never actually fall into the threat thereby moving into an insecure state.

In a process calculi as the one we are dealing with, the interactions among locations are those that need to be monitored and controlled, and in particular when they happen some information may go from one location to another. Therefore, what we need to check and assess as possible threats are the movements of information that might not be desired. In such cases, we need to assure that the state reached after the interaction is secure.

Having said that, it shall be straightforward to realise that we need to trap all the possible interactions that are of our interest, and whenever they take place we need to check the states just before and just after the interaction, to see whether the interaction is leading to some insecure state. With this, the logic formula that naturally arises is the traditional $AG$, in our case annotated with some set of transitions, thereby converting our Logics in a variant of ACTL as already mentioned.

Moreover, the problem of properly characterising what security properties can indeed be enforced at runtime by access control methods has been dealt with by Schneider \cite{Schneider00}, and certainly they come up with the conclusion that \emph{safety} \cite{Alpern86recognizingsafety} properties are the answer. Then, as safety properties are those related with the $G$ modality of LTL
and the $AG$ one in CTL \cite{BaierKatoen08}, our assessment of the previous paragraph makes even more sense.

\begin{subsubsection}{Syntax}
The formal syntax of our Logic is given in Table \ref{tab:syntax_ACTLv}. We shall express an obligation (something we want the network to satisfy) as an $AG$ formula, meaning we want the property to be satisfied always, and in all possible paths the system might run into; this clearly enforces security. As a subscript to the formula, some set of transitions is to be given. The target of the transition must be a constant net, which is a restriction that will be later understood, but it is still useful as we will in general be willing to ensure security properties of databases or so. The idea is that some of the transitions in the running network might be trapped by the set of transitions given in this subscript, and in those cases the states related by the transition are to be analysed. The network states relating the transition are then analysed by checking the $Pred$ expressed in the formula, which can be a combination of smaller state predicates or the simplest ones comparing two values or testing the occurrence of some value in some location. For this last issue, the location that must be tested may be the one just before the transition (if no prime symbol is added) or the one just after the transition (if the prime symbol --$'$-- is added).

\newcommand{\tgt}{l}

%%Syntax of ACTL
%%%%%%%%%%%%%%%%%%%%%%%%%%%%%%%%%%%%%%%%%%%%%%%%%%%%%%%%%%%%%%%%%%%%%%%%%%%%%%%
\begin{table}[t]
$$
\begin{array}{l@{\quad}r@{\quad::=\quad}l}

Obl  \in  {\bf Obligations}	&	Obl
	&	AG_{\{labs\}}Pred
\\
labs  \in  {\bf Lab}			&	labs
	&	\Lc (\w_s) :  \locate{\textbf{c}({\veck{\ell^t}})}{\ell}\ (\w_t)
\\
\textbf{c} \in {\bf Cap}		&	\textbf{c}
	&	\textbf{o} \mid\ \textbf{i} \mid\ \textbf{r}
\\[.5ex]
Pred  \in  {\bf Predicates}	&	Pred
	&	\TrueS \mid \FalseS \mid \lnot Pred \mid Pred \lor Pred
\\
				&	\multicolumn{1}{l}{} %this multicolumn is to avoid printing the ::= in this line
	&	\mid Pred \land Pred \mid \forall x : Pred \mid \exists x : Pred \mid bp
\\
bp \in {\bf BasicPredicates}	&	bp
	&	\ell_a = \ell_b \mid \textbf{test}(\veck{\ell_a})@\ell_b \mid \textbf{test}'(\veck{\ell_a})@\ell_b
\mid \Gamma_1 \geq \Gamma_2
\\ \Gamma \in L & \Gamma & x \mid \gamma

\end{array}
$$
\caption{ACTLv Syntax -- How to express obligations.}
\label{tab:syntax_ACTLv}
\end{table}

\end{subsubsection}

\begin{subsubsection}{Semantics}

%%Semantics of ACTLv (Satisfaction relation for Obligations - third (or actually fourth) version)
%%%%%%%%%%%%%%%%%%%%%%%%%%%%%%%%%%%%%%%%%%%%%%%%%%%%%%%%%%%%%%%%%%%%%%%%%%%%%%%

\begin{table}[t]
$$
\begin{array}{l}
N_0 \models_{Obl} AG_{\{cut\}} Pred
\\
\text{iff}
\\
\forall \text{ paths } N_0 \rightarrow^{*} N_i \rightarrow^{l_s:c(l)@l_t} N_{i+1} :
\\
\quad (\forall \theta : \, cut\,\theta = l_s:c(l)@l_t \,  \Rightarrow \, (N_i, N_{i+1}) \models_{Pr}^{\theta} Pred)
\end{array}
$$
\caption{ACTLv Semantics -- Satisfaction relation $\models_{Obl}$.}
\label{tab:semantics_ACTLv_ObligationsRightCoinductive}
\end{table}

The formal semantics of the Logics is divided into three satisfaction relations, one for each of the syntactic categories (\textbf{Obligations}, \textbf{Predicates} and \textbf{BasicPredicates}) defined in Table \ref{tab:syntax_ACTLv}. The first satisfaction relation gives semantics for the obligation formula and it is given in Table \ref{tab:semantics_ACTLv_ObligationsRightCoinductive}. It basically checks that in every path, when it is possible to substitute the $cut$ of the obligation thereby matching the label of the path's last transition, then the pair of nets that are connected by that transition satisfy the given predicate.

The satisfaction relation $\models_{Pr}$ is defined in Table \ref{tab:semantics_ACTLv_Predicates}. The rules should be straightforward, the only detail that shall be explained is in the three last rules. For the last one, if the predicate is a basic one, then the satisfaction relation $\models_{bp}$ is used. For the previous two, we have to make an extra substitution in the predicate for evaluating it, and the values that we might take are all those that occur in the involved nets, and for that purpose the auxiliary function $Loc$ is defined in the same table.

The satisfaction relation for basic predicates $\models_{bp}$ is given in Table \ref{tab:semantics_ACTLv_BasicPredicates}. The rules are straightforward, checking the equality in one case, and interpreting the \textbf{test} in the other two, distinguishing whether the \textbf{test} aims to check the net just before or just after the transition. In the three cases, the substitution is actually performed while checking the corresponding condition.

The way how the \textbf{test} is interpreted depends on the structure of the net, and its structural inductive definition is given in Table \ref{tab:semantics_ACTLv_testInterpretation}.

%%Semantics of ACTL (Satisfaction relations for Pred, Basic Pred and test)
%%%%%%%%%%%%%%%%%%%%%%%%%%%%%%%%%%%%%%%%%%%%%%%%%%%%%%%%%%%%%%%%%%%%%%%%%%%%%%%

% ---- Satisfaction relation for [PRED]
\begin{table}[t]
$$
\begin{array}{
	@{(N_1, N_2)}
	@{\quad\models_{Pr}^{\theta}\quad}
	l
	@{\quad\text{iff}\quad}
	l}
\TrueS			&	\TrueB
\\
\FalseS			&	\FalseB
\\
\lnot Pred			&	(N_1, N_2) \not\models_{Pr}^{\theta}Pred
\\
Pred_1 \lor Pred_2	&	(N_1, N_2) \models_{Pr}^{\theta}Pred_1	\lor
						(N_1, N_2) \models_{Pr}^{\theta}Pred_1
\\
Pred_1 \land Pred_2	&	(N_1, N_2) \models_{Pr}^{\theta}Pred_1	\land
							(N_1, N_2) \models_{Pr}^{\theta}Pred_1
\\
\forall x : Pred		&	\forall l \in %\textcolor{red}{
Loc(N_1) \cup Loc(N_2)%}
:
						(N_1, N_2) \models_{Pr}^{\theta[l/x]}Pred
\\
\exists x : Pred		&	\exists l \in %\textcolor{red}{
Loc(N_1) \cup Loc(N_2)%}
:
						(N_1, N_2) \models_{Pr}^{\theta[l/x]}Pred
\\
bp				&	(N_1, N_2) \models_{bp}^{\theta}bp
\end{array}
$$

$$
\begin{array}{r@{\quad=\quad}l}
Loc (l_1 :: \langle \veck{l_2} \rangle)	&	\{ l_1\} \cup Vec(\veck{l_2})
\\
Loc (l_s :: a(\veck{l_d}) @ l_t . P)		&	\{ l_s, l_t \} \cup Vec(\veck{l_d}) \cup Loc(P)
\\
Loc (N_1 \netpar N_2)		&	Loc(N_1) \cup Loc(N_2)
\end{array}
$$

$$
\begin{array}{r@{\quad=\quad}l}
Vec (\alpha, \veck{\alpha'})	&	\{ \alpha \} \cup Vec(\veck{\alpha'})
\\
Vec (\epsilon)				&	\emptyset
\end{array}
$$

\caption{ACTLv Semantics -- Satisfaction relation $\models_{Pr}$ and auxiliary functions $Loc$ and $Vec$.}
\label{tab:semantics_ACTLv_Predicates}
\end{table}

% ---- Satisfaction relation for [BASICPRED]
\begin{table}[t]
$$
\begin{array}{
	@{(N_1, N_2)}
	@{\quad\models_{bp}^{\theta}\quad}
	l
	@{\quad\text{iff}\quad}
	l}
\ell_a = \ell_b		&	(\ell_a \theta) = (\ell_b \theta)
\\
\textbf{test}(\ell_a)@\ell_b		&	\dt{ \textbf{test}(\ell_a \theta)@(\ell_b \theta), N_1 }
\\
\textbf{test}'(\ell_a)@\ell_b		&	\dt{ \textbf{test}(\ell_a \theta)@(\ell_b \theta), N_2 }
\\
\Gamma_1 \geq \Gamma_2		&	(\Gamma_1 \theta) \geq (\Gamma_1 \theta)
\end{array}
$$
\caption{ACTLv Semantics -- Satisfaction relation $\models_{bp}$.}
\label{tab:semantics_ACTLv_BasicPredicates}
\end{table}

% ---- Interpretation of [TEST]
\begin{table}[t]
$$
\begin{array}{
	@{[\!(\textbf{test}(}
	c
	@{)@l_2,\,}
	c
	@{\,)\!]\quad=\quad}
	l}
\veck{l_1}	&	N_1 \netpar N_2		&	\dt{ \textbf{test}(\veck{l_1})@l_2, N_1 } \lor
							\dt{ \textbf{test}(\veck{l_1})@l_2, N_2 }
\\
\veck{l_1}	&	\Localised{l}{pol}{P}		&	\FalseB
\\
\veck{l_1}	&	\Localised{l_3}{pol}{\val{\veck{\YZl_4}}}	&	(l_2 = l_3 \land \veck{l_1} = \veck{l_4})
\end{array}
$$
\caption{ACTLv Semantics -- Interpretation of \textbf{test}.}
\label{tab:semantics_ACTLv_testInterpretation}
\end{table}

\end{subsubsection}

\begin{subsubsection}{An example}
\label{subsubsec:ACTLvexample}
Continuing with our example of Section \ref{subsec:examptinynetwork}, we could decide to establish some global property we want the network to satisfy. We should be writing the property following the Syntax of our just-defined ACTLv Logics. The aim is that our property is met by the given network.

As discussed in Subsubsection \ref{subsub:tinynetworkwithpolicies}, we may aim at not allowing any Nurse to get access any Private Notes from any patient. We should then have some property that traps the transitions that could lead to such an ``insecure'' state. Clearly, in the LTS induced by the Semantics of \textbf{AspectKBL}, the transitions that could lead to such state are both if a Nurse reads the data directly or if they are given the data by some Doctor. Therefore, we will have to express two separate global properties, one capturing one case and another for the other case. Informally, we could express that by ``any \textbf{read} of Private Notes should be only done by a Doctor'' and ``any \textbf{out} of Private Notes should not be done to a Nurse's location''. Formally, we have to follow our syntactic conventions, thereby writing the following:
\begin{equation}
\label{eq:oblig1}
AG_{\{\$u:\textbf{r}(-,\texttt{PrivateNotes},-)@EHDB \}}\text{test}(\texttt{Doctor},\$u)@ROLES
\end{equation}
and
\begin{equation}
\label{eq:oblig2}
AG_{\{\$u:\textbf{o}(-,\texttt{PrivateNotes},-)@Olsen \}}\text{test}(\texttt{Doctor},Olsen)@ROLES
\end{equation}

Equation \ref{eq:oblig1} formalises the first informal property, but restricting to the database where the data might be (due to our syntactic restriction that the target location must be a constant). In Equation \ref{eq:oblig2} the restriction is a bit less practical, but still necessary due to our formal language. We need to specify which is the target location we are talking about, and in this case we would need to write one specific global property for each location that we suspect it might be a Nurse in our network. In our case, we do that with just location $Olsen$.

Later, we will be trying to automatically check if these (and other) properties hold in our network, and also in some other more complex ones.
\end{subsubsection}

\end{subsection}

\begin{subsection}{Structural Congruent nets produce same results}
\label{subsec:congruency}
One expected property of the Semantics defined in the previous Subsection is that if it is given two different \textbf{AspectKBL} networks, but which are actually structurally congruent, then the result should be the same. Indeed, otherwise it would mean that the result depends on how the network is described and not on which components it has, which in the end are the ones that make the network run. In this Subsection we establish a lemma about this issue.

\begin{lemma}
All the following rules hold:
$$
\begin{array}{lr}
\Inference
{N_1 \equiv N_2}
{N_1 \models_{Obl} Obl \iff N_2 \models_{Obl} Obl}
&
[StrC1]
\end{array}
$$
$$
\begin{array}{lr}
\Inference
{N_1 \equiv N_2 \land M_1 \equiv M_2}
{(N_1,M_1) \models_{Pr}^{\theta} Pred
	\iff
	(N_2,M_2) \models_{Pr}^{\theta} Pred}
&
[StrC2]
\end{array}
$$
$$
\begin{array}{lr}
\Inference
{N_1 \equiv N_2 \land M_1 \equiv M_2}
{(N_1,M_1) \models_{bp}^{\theta} bp
	\iff
	(N_2,M_2) \models_{bp}^{\theta} bp}
&
[StrC3]
\end{array}
$$
$$
\begin{array}{lr}
\Inference
{N_1 \equiv N_2}
{
\dt{ \textbf{test}(\veck{l_1})@l_2, N_1 }
	\iff
	\dt{ \textbf{test}(\veck{l_1})@l_2, N_2 }}
&
[StrC4]
\end{array}
$$
\end{lemma}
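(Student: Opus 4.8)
The plan is to prove the four rules \emph{in the reverse of the order listed}, since they form a dependency chain: [StrC4] is the base case consumed by [StrC3], which feeds [StrC2], which in turn drives [StrC1]. Because $\equiv$ is generated by the axioms of Table~\ref{tab:semantics_AspKB_Structural_Congruence} together with reflexivity, symmetry, transitivity and the parallel-context rule, the natural tool for [StrC3] and [StrC4] is induction on the derivation of $N_1 \equiv N_2$, for [StrC2] it is structural induction on $Pred$, and for [StrC1] it is a direct analysis of the paths.

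For [StrC4] I would induct on the derivation of $N_1 \equiv N_2$. The governing observation is that $\dt{\,\textbf{test}(\veck{l_1})@l_2,\cdot\,}$ inspects only located data $\Localised{l_3}{pol}{\val{\veck{l_4}}}$, returns $\FalseB$ on every located process $\Localised{l}{pol}{P}$, and distributes as a $\lor$ over $\netpar$. Each generating axiom merely rearranges a process, splits a parallel process into two located processes, unfolds a replication, or adjoins a $\Localised{l}{pol}{\nil}$ summand; none of these alters the located data occurring in the net, and each evaluates to $\FalseB$ on its process part, so the value of $\dt{\cdot}$ is preserved. The parallel-context rule is discharged from the distributivity over $\netpar$ and the commutativity and associativity of $\lor$, and reflexivity, symmetry and transitivity are immediate as the conclusion is an $\iff$. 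Granting [StrC4], rule [StrC3] follows by a case split on $bp$: the cases $\ell_a = \ell_b$ and $\Gamma_1 \geq \Gamma_2$ are interpreted solely through $\theta$ and never mention the nets, hence hold trivially, while $\textbf{test}(\ell_a)@\ell_b$ reads the first component and is settled by [StrC4] on $N_1 \equiv N_2$, and $\textbf{test}'(\ell_a)@\ell_b$ reads the second component and is settled by [StrC4] on $M_1 \equiv M_2$.

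Rule [StrC2] proceeds by structural induction on $Pred$. The propositional cases ($\TrueS$, $\FalseS$, $\lnot$, $\lor$, $\land$) are routine rewritings of the inductive hypotheses, and the base case $bp$ is exactly [StrC3]. The one genuinely delicate point --- and the step I expect to be the main obstacle --- is the pair of quantifier cases $\forall x : Pred$ and $\exists x : Pred$, whose semantics ranges over $Loc(N_1) \cup Loc(M_1)$, a set that a priori depends on how the nets are written. To close them I would first prove an auxiliary fact, $N_1 \equiv N_2 \Rightarrow Loc(N_1) = Loc(N_2)$, by induction on the derivation of $\equiv$ --- each axiom leaves the set of occurring locations unchanged (unfolding $*P$ duplicates only locations already present, adjoining $\Localised{l}{pol}{\nil}$ contributes only the already-present $l$, and so on). With this, the two quantification ranges coincide, and the inductive hypothesis applied under each extended substitution $\theta[l/x]$ delivers the equivalence.

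Finally, [StrC1] reduces to [StrC2] via the fact that the reduction relation is closed under $\equiv$ on both source and target: by the congruence reduction rule of Table~\ref{tab:semantics_AspKBL_Reaction_Rules}, $N_1 \equiv N_2$ and $N_1 \rightarrow^{lab} N'$ yield $N_2 \rightarrow^{lab} N'$ (and symmetrically), so $N_1$ and $N_2$ possess exactly the same successors under every label and hence exactly the same paths from the first transition onward. Consequently, for a trapped transition at position $i$ the analysed pair $(N_i, N_{i+1})$ is literally identical for $i \geq 1$, while for $i = 0$ one path analyses $(N_1, N')$ and the other $(N_2, N')$, which agree by [StrC2] applied to $N_1 \equiv N_2$ and the reflexive instance $N' \equiv N'$. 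Since the universal quantification over paths in the semantics of $AG_{\{cut\}}Pred$ therefore returns the same verdict for $N_1$ and $N_2$, the rule holds.
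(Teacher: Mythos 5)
Your proposal is correct, and for the one rule the paper actually proves in detail --- [StrC4] --- you take exactly the paper's route: induction on the derivation of $N_1 \equiv N_2$, observing that the interpretation of $\textbf{test}$ returns $\FalseB$ on every located process, only ever reads located data (which no congruence axiom creates or destroys), and distributes as a disjunction over $\netpar$ so that the parallel-context rule is closed by the inductive hypothesis. Where you genuinely go beyond the paper is in the other three rules: the paper dismisses them with ``the other proofs follow the same fashion,'' which is not literally accurate, since [StrC3] needs a case split on $bp$, [StrC2] needs structural induction on $Pred$, and [StrC1] needs an argument about paths rather than about the congruence derivation. Your dependency chain [StrC4] $\Rightarrow$ [StrC3] $\Rightarrow$ [StrC2] $\Rightarrow$ [StrC1] supplies exactly the missing content. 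Two of your contributions are worth highlighting because the paper is silent on them and they are the only non-routine points: the auxiliary fact that $\equiv$ preserves $Loc(\cdot)$, without which the quantifier cases of [StrC2] would compare quantifications over \emph{a priori} different ranges; and the observation for [StrC1] that the congruence rule of the reduction semantics forces $N_1$ and $N_2$ to have identical labelled successors, so that all analysed pairs beyond the initial one coincide literally and the initial pair is handled by [StrC2] with a reflexive second component. Both steps are sound, and the result is a more complete proof than the one in the paper.
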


\begin{proof}{
We shall only prove the rule [StrC4], the other proofs follow the same fashion. For proving this rule, the definitions that we might use are those in Table \ref{tab:semantics_ACTLv_testInterpretation}, let's call them respectively $(a)$, $(b)$ and $(c)$ just for the purposes of this proof.

\newcommand{\test}[1]{\dt{ \textbf{test}(\veck{l_1})@l_2, #1 }}

Assuming $N_1 \equiv N_2$, we have to prove $\test{N_1} = \test{N_2}$. The proof is by induction on how $\equiv$ is obtained. We have four cases (three base- and one inductive-), according to Table \ref{tab:semantics_AspKB_Structural_Congruence}.

\textbf{Case} $N_1 = l::^{pol} P_1 \ppar P_2$ and $N_2 = l::^{pol}P_1 \netpar l::^{pol}P_2$
$$
\begin{array}{l}
\test{l::^{pol} P_1 \ppar P_2} \\
= [\text{by } (b)] \\
\FalseB \\
= [\text{boolean logic}] \\
\FalseB \lor \FalseB \\
= [\text{by } (b) \text{ twice}] \\
\test{l::^{pol}P_1} \lor \test{l::^{pol}P_2} \\
= [\text{by } (a)] \\
\test{l::^{pol}P_1 \netpar l::^{pol}P_2}
\end{array}
$$

\textbf{Case} $l ::^{pol}\ P$ and $l ::^{pol} P \netpar l ::^{pol} {\bf 0}$

\quad Analogous before since $\nil$ is a process.

\textbf{Case} $N_1 = l ::^{pol}\ * P$ and $l ::^{pol}\ P \ppar\ * P$

\quad Even more trivially since no application of $(a)$ is needed.

\textbf{Inductive case:} $N_1 = N \netpar N_a$ and $N_2 = N \netpar N_b$

\quad From the induction hypothesis we have that $\test{N_a} = \test{N_b}$.
$$
\begin{array}{l}
\test{N \netpar N_a} \\
= [\text{by } (a)] \\
\test{N} \lor \test{N_a} \\
= [\text{by Induction Hypthesis}] \\
\test{N} \lor \test{N_b} \\
= [\text{by } (a)] \\
\test{N \netpar N_b} \\
\end{array}
$$
}\end{proof}

Given that two congruent nets produce the same result when checking an ACTLv formula over them, we could rely on this to automatically check a given formula in any net that is structurally congruent to the one we are supposed to check. This gives the idea of single-representative for structurally congruent nets, and helps in choosing the one that fits better for an automatic checking. Indeed, when later we will be doing the automatic checking of the satisfaction formula, we will often be analysing some net equivalent to the given one.
\end{subsection}

\begin{subsection}{Interpretation of the Semantics over an LTS}
\label{subsec:LTSandDecidability}
As observed in Section \ref{sec:AspectKBL}, the Semantics of the \textbf{AspectKBL} language induces an LTS, and over such a structure it is possible to interpret the ACTLv Semantics from Subsection \ref{subsec:ACTLv_SyntaxSemantics}.

Now, if one could check that for all paths, it is always the case that, whenever some transition over that path is done and whose label matches the $cut$ of an ACTLv obligation then the predicate is satisfied in the reached state; then one would be verifying the formula.

However, since our \textbf{AspectKBL} language is Turing-complete\cite{KlaimTuringcomplete}, the paths might be infinite, and so
the breadth of the path tree. Although there are some results that show that for certain subclasses of LTS's the problem is decidable
\cite{Esparza99}\cite{BaierKatoen08}, we do not want to risk falling in a different subclass. Indeed, in \cite{Esparza99} it is also shown that, for instance, Branching Time Logics without $EG$ (and therefore $AF$) operator (a similar one to our ACTLv) is undecidable for Petri Nets, although it is decidable for some subclasses of Petri Nets, including BPP (Basic Parallel Processes).

\begin{subsubsection}{Decision taken to overcome the decidability issues}

For the moment, we shall concentrate in finite \textbf{AspectKBL} networks, and for achieving that we syntactically rule out any network that contains the replication operator $*$. This will of course limit our expressive power, but will at least give us the certainty that any network that is insecure in the sense of the ACTLv obligation we aim will be detected. Moreover, if some network is detected to be insecure, even without having any replication operator, then any extension of that network which includes some replication operators is also insecure.

\end{subsubsection}

\begin{subsubsection}{Interpreting our example}
In Subsubsection \ref{subsubsec:ACTLvexample} we have two examples of global properties that would be interesting to satisfy by a given network. If we take the network defined 
in Subsubsection \ref{subsub:tinynetworkwithoutpolicies}
and interpret the Semantics of Table \ref{tab:semantics_ACTLv_ObligationsRightCoinductive} over the LTS induced by the network, then we will see that the network does not actually satisfy the property.

However, if we take the network of Subsubsection \ref{subsub:tinynetworkwithpolicies}, that was slightly modified by adding some security policies to some of the locations, and then we again interpret the Semantics of Table \ref{tab:semantics_ACTLv_ObligationsRightCoinductive} over the LTS induced by the network, then we will see that the network does satisfy the property.

In the next Section, we will be assessing another way of checking this, without having to induce the whole LTS for deciding whether a given network satisfies a given property, as this would involve, for more complex networks (and in the general case), exponential time due to the state explosion problem.
\end{subsubsection}

\end{subsection}

\end{section}

\begin{section}{An alternative approach to model checking}
\label{sec:SmartModelChecking}
It is widely known that Model Checking suffers from the state explosion problem.
Several approaches have been taken for overcoming this limitation, some based on abstractions of the state space,
and some based on combination with other system assessing techniques such as static analysis.

In this Section, we propose an alternative approach, which works for our \textbf{AspectKBL} language, in which we avoid to explore the entire state space of the induced LTS while performing the model checking, by relying on some features of our language. We will be model checking \textbf{AspectKBL} networks not by inducing the entire LTS but by looking at the network definition, directly in the \textbf{AspectKBL} syntax.

We first informally introduce our approach (in Subsection \ref{subsec:smartmodelcheckinginformal}) and then formalise how we do it (in Subsection \ref{subsec:smartmodelcheckingformal}). Finally, an example is described (in Subsection \ref{subsec:modelcheckingexample}).

\begin{subsection}{Model checking by inspecting each single action}
\label{subsec:smartmodelcheckinginformal}
We know that the LTS induced by our \textbf{AspectKBL} network could be properly model checked following the traditional way. But we also know that this LTS depends directly on the specific network we have at hand. Moreover, the network is governed by some security policies attached to each location, and indeed those security policies never change during the runtime of the network, and this is actually ensured by the Semantics of Table \ref{tab:semantics_AspKBL_Reaction_Rules}. This is indeed a key point for the approach we shall take in this work. Certainly, since the policies never change during runtime, we may rely on them to be the ones that will in the end either allow or deny the actions to happen. This is indeed a key concept that helps us propose this alternative approach to model checking.

\begin{paragraph}{Over-approximation}
Our approach is to over-approximate the behaviour of the runtime network, with which it is possible to model check in a very fast way. We shall interpret the Semantics of the ACTLv formula statically, by looking directly to the \textbf{AspectKBL} network instead of to the LTS induced by it. The actions defined in each process are the basic concept, and this over-approximation allows us to check each and every action just once, instead of checking it in every path it may occur. Indeed, as whenever an action is matched by the $cut$ of the ACTLv obligation then the predicate of the obligation must be \TrueB, instead of checking the action considering the path in which it occurs we can just check the action by itself, relying that it will certainly be governed by some security policies, which will not change during runtime so we could know them, and the possible decisions of them, beforehand. Therefore, if the Belnap recommendation of the security policies that govern a given action implies the intended predicate, we can safely certify the security of the given action.

Moreover, if one recalls how an LTS is induced from a given \textbf{AspectKBL} network, it is straightforward that the security policies are directly involved on how the Semantics influence the induction of every path. From there, one could interpret the Semantics of ACTLv after every path has been obtained. Therefore, we could think on an ``order'' on how the operations take part, or how are they stepwise. Indeed, the security policies are considered first, and then the global property should be interpreted over the resulting LTS. Since our approach will consider each action just once, and since the policies that govern such action are fixed at runtime, we could rely on them to know whether the action will be allowed or not. The procedure then changes slightly. Instead of inducing the LTS already considering the aspects, and then having just to check the relevant transitions against the global property, here we take each action exactly once, although in some cases it might be granted and in some it may not, and although the actual values bound to the occurring variables might change dramatically. Indeed, what we have now is a more general (and over-approximate) way of checking each action. However, the actions are still governed by some aspects of security policies, which will be those that in the end will either grant or deny the action.

Actually, and since the action might contain a variable as target, we may need to ground that variable first, so we could count on a given set of policies (the ones coming from the source of the action, which is known because it is taken from the description of the network, and the ones coming from the target, which after grounding the variable is also known).
\end{paragraph}

\begin{paragraph}{Determining involved policies}
The process of grounding the target, if it is indeed a variable, is done by inverting the order in which the operations mentioned in the previous paragraph take part. Taking into account that in the $cut$ of an ACTLv Obligation the target must be a constant (restricted by the Syntax of Table \ref{tab:syntax_ACTLv}), if we first check whether the action might be trapped by the $cut$ then we could proceed by considering the action, otherwise we can safely (and trivially) certify the security of the action, since it is not relevant for our purposes. If the action is relevant, and since the target in the $cut$ is a constant, we may safely assume that the action will be relevant only in case the target variable is ground to the specific constant value occurring in the $cut$. To illustrate, assume the action is $l_1::\textbf{read}(x)@y$ and the $cut$ of the global property is $\$x::\textbf{read}(\$y)@l_2$. The action could occur in the induced LTS with the variables $x$ and $y$ bound to various different values, for instance $l_1::\textbf{read}(l_1)@l_1$, $l_1::\textbf{read}(l_2)@l_3$, $l_1::\textbf{read}(l_2)@l_1$, $l_1::\textbf{read}(l_3)@l_3$, etc. However, only those occurrences that have the value of variable $y$ equal to $l_2$ are relevant for the global property, for instance $l_1::\textbf{read}(l_1)@l_2$, $l_1::\textbf{read}(l_2)@l_2$, $l_1::\textbf{read}(l_3)@l_2$, etc.

Indeed, some other variables occurring either in the action or in the $cut$ may also be ground during this matching, and only then we start assessing whether the resulting (possibly completely grounded) action might be allowed or not by the (fixed) security policies.

In future work, we may relax the syntactic restriction that the target in the $cut$ of the obligation must be a constant, but the some Static Analysis might be needed in order to obtain a set of possible values for the occurring variables, in order to determine the possible involved policies.
\end{paragraph}

\begin{paragraph}{Using policies}
Now, if we can certainly assure that the policies will not allow the action, again we can certify that the action is secure (other possible values for the variables are not relevant for the global property, and the ones found with this grounding will never be used for an actual transition). However, if the action may possibly be allowed by the security policies, our task is slightly different, although we can still do some work. Indeed, if the policies may allow the action to take place, then their recommendations $rec$ must be \TrueB, but then we can rely on them to be sure that some properties hold whenever the action might be allowed. The properties that hold are exactly those implied by the constraints established in the $rec$ of the involved security policies. If they imply the property established in the predicate $Pred$ of the obligation, then we can also certify the action we are analysing.

The procedure basically involves taking one by one the aspects and finding a substitution that can be applied to both the cut of the aspect and the action in order to unify them. If such substitution does not exist, we ignore the aspect, as it will give bottom (thereby granting the action when mapping into 2-valued logic). Otherwise, if there is such substitution, we need to apply it to the condition of the aspect to see if it might be considered, otherwise (if it will never be considered given such substitution, it is also bottom and thereby grant) we also ignore it. After knowing the aspect might be considered, we take the same substitution and apply it to the recommendation, and we have to count with the resulting constraint and proceed with the next aspects (in each finding its own substitution), until we at the end combine all the constraints obtained using 4-valued logic and then map to 2-valued for deciding whether the action might be granted. If it cannot be granted then we can certify the action, otherwise we need to prove that the combination of the constraints implies the predicate of the global property.
\end{paragraph}

\begin{paragraph}{Summing up}
If each and every action is certified following that procedure, we can certify the whole network. Otherwise, if we cannot safely certify some of the actions, then we cannot conclude anything about the entire network.

This procedure does not need any LTS to be induced, nor the hypothetical one assuming no security policy is different than the trivial one, and neither the pruned one according to the security policies (recall Subsection \ref{subsec:prunedLTS}). Certainly, since every action is considered and their variables are ground just by matching with the $cut$ of the obligation, thereby detecting possible relevance of the action for the desired global property, then we cannot even know if those values for the variables can be indeed possible in the induced LTS. Moreover, this could be the source of an imprecise over-approximation, since we may finally not certify the network due to some possible action we cannot certify, but the action might actually never occur in the pruned LTS. It is then a key subject of study to determine the circumstances in which we could rely on the precision of our procedure. On the other side, if the over-approximation is indeed safe, we can certainly rely on a network that has been certified. For achieving this, the correctness of our approach shall be proven.
\end{paragraph}
\end{subsection}

\begin{subsection}{The algorithm for model checking}
\label{subsec:smartmodelcheckingformal}
The main algorithm for performing our alternative approach to model checking basically does what is explained in the previous Subsection. Here we shall see more formally how it achieves that, and show some auxiliary parts of it.

\begin{paragraph}{Unifying}
First, recall that we have to match in several occasions with the action we are analysing in each step. For performing the matching, some substitution must be found, that unifies the action with the given entity that is being matched. Whenever we are given one entity, which could be the $cut$ of an obligation or the $cut$ of an aspect, we must try to match it with the action we are analysing, and the function for making the unification is defined as follows:

$$
\begin{array}{l}
unify\ l_1\ l_2 = \text{if } l_1 = l_2 \text{ then } id \text{ else } fail \\
unify\ l_1\ v_2 = [ v_2 \mapsto l_1 ] \\
unify\ v_1\ l_2 = [ v_1 \mapsto l_2 ] \\
unify\ v_1\ v_2 = [ v_1 \mapsto v_2 ] \\
unify\ '-'\ l_2 = id \\
unify\ '-'\ v_2 = id
\end{array}
$$

One should notice that in the fourth line the direction of the mapping is always from the first to the second parameter. This must be as it is because the actual action being analysed will always contribute to the second parameter, and since we need to find possible relevance of the action to the $cut$ being considered, we have to map variable from the $cut$ into the action. Besides, the last 2 lines capture the case where in the $cut$ it is ignored the value of some given parameter, and in such cases the action might of course be trapped no matter which is the component on it. Finally, a $fail$, not only here but in later functions, means that there was not possible to find a matching, and in our case it will mean that by no means the action might be relevant for the $cut$ we are taking.

That unification function is only for single literals, but while trying to match $cut$ with actions, there are indeed several literal occurring, and moreover the number is unknown, and even unbound, due to the ones that can occur as parameters of the action. Therefore, we need an extended function that allows capturing those cases:

$$
\begin{array}{l}
unifylist\ nil\ nil = id \\
unifylist\ (x:xs)\ (y:ys) = \theta_1 . \theta_2 \\
\quad	where \\
\qquad	\theta_1 = unify\ x\ y \\
\qquad	\theta_2 = unifylist\ (xs\ \theta_1) (ys\ \theta_1) \\
unifylist\ nil\ (y:ys) = fail \\
unifylist\ (x:xs)\ nil = fail
\end{array}
$$

It should be noticed the order in which the substitution pairs are put into the substitution sequence, in particular in the second line. This is done as it is because it directly depends on how it is later used for performing the substitution, with the sequence found. Indeed, given a substitution consisting of a sequence of several pairs, we shall start applying from the \emph{beginning} of the sequence, and then continue by applying the rest of the substitution pairs to the obtained entity, and so on.
\end{paragraph}

\begin{paragraph}{Matching whole action}
Using the unification function defined in the previous paragraph, we need to find an entire substitution that allows the action being analysed to be matched, or trapped, by the specific $cut$ considered right now, namely a substitution $\theta$ such that $cut\ \theta$ = $act\ \theta$, assuming the action is $act$.

This task is performed by a function $findsubs$, that takes the $cut$ and the action, and it stepwise performs the unifications, applying the substitution parts already found to later literals that are to be unified. The definition is the following:

$$
\begin{array}{l}
findsubs\ cut\ action = \theta_1 . \theta_2 . \theta_3\ \\
\quad	where \\
\qquad	\theta_1 = unify\ locsrc1\ locsrc2 \\
\qquad	\theta_2 = unifylist\ (params1\ \theta_1)\ (params2\ \theta_1) \\
\qquad	\theta_3 = unify\ ((loctgt1\ \theta_1) \theta_2)\ ((loctgt2\ \theta_1) \theta_2)\\
\quad and\ assuming \\
\qquad	cut = locsrc1\ params1\ loctgt1 \\
\qquad	action = locsrc2\ params2\ loctgt2
\end{array}
$$

Again, notice that the order, in the first line, is set in such way that then the entire substitution is to be applied starting from the beginning of the sequence. Moreover, the first line will give $fail$ as long as at least one of the components gives $fail$. It is also worth noticing that finding a substitution in such way is only valid provided both the $cut$ and the action are defined to be the same type of operation, the same \emph{capability}, namely \textbf{read}, \textbf{in} or \textbf{out}.
\end{paragraph}

\begin{subsubsection}{Applying algorithm}
As discussed in Subsection \ref{subsec:smartmodelcheckinginformal}, the algorithm for model checking will take each action in its turn, and it will check if it is relevant for the $cut$ of the obligation. This is done by trying to match the $cut$ with the action. If it is possible, then it will use the substitution found in order to see if under that condition the action might be granted by the policies, otherwise it is trivially certified. To detect if the action might be granted by the policies, the combination of them must be considered, and within each of them every single aspect must be considered in its turn. For each aspect, first it is necessary to see if it is relevant for the given action, by finding a substitution with its $cut$, and in case it is, then the condition $cond$ and later the recommendation $rec$ must be substituted and evaluated. Finally, if the $rec$ might be $\TrueB$ then we could count on that as one specific constraint that will hold whenever the action can be executed, and check if under the set of constraints found the predicate of the obligation will always be satisfied.

That is formalised under the following algorithm, which is given by stepwise dividing its parts, in different levels of abstraction. Broadly, the algorithm consist of three parts, the first one in charge of taking each and every action and analysing it in an isolated way. The second part is the one in charge of analysing how a policy influences the given action. The third part is the one in charge of evaluating the constraints obtained in order to see whether they imply the expected predicate.

\begin{paragraph}{Isolating action}
This first part isolates the action, and it makes the alternative calls to other functions to determine in which cases the action might be certified. It consists of three subparts: the one in charge of the entire algorithm, the one in charge of deciding whether the specific action is relevant for the global property, and the one in charge of checking if a relevant action complies with the obligation.

The main algorithm just splits the entire network into single actions, for checking them separately. If each and every actions satisfies the predicate then the entire network does so, otherwise it may not:
\begin{verbatim}
[MAIN algorithm: Checks if Network N satisfies Obligation Obl]

If each action A from Network N satisfies Obl
Then Return True
    (obligation is satisfied)
Else Return False
    (obligation may not be satisfied)
\end{verbatim}
In capital letters it is referenced to some other parts of the algorithm, or to some function already formally defined. Besides, the sentences between parenthesis are comments of what the output of the pseudo code means.

Once each action is isolated, the certification of it must be done by first checking its relevance to the given global property, by trying to match it using some possible substitution. If no substitution can be found, then the action trivially satisfies the obligation, otherwise its compliance has to be assessed:
\begin{verbatim}
[Split for actions: Checks if Action A satisfies Obligation Obl]

If FINDSUBS (cut) (A) can find a substitution
Then CHECKCOMPLIANCE of A under the substitution found
Else Return True
    (obligation is satisfied)
(where cut is the cut of the Obligation Obl)
\end{verbatim}

For checking the compliance of the action, first it has to be checked whether the action might indeed be executed at all, because otherwise the action is again trivially certified. In case the action might indeed be executed, then the constraints raised by the policies must be checked:
\begin{verbatim}
[CHECKCOMPLIANCE algorithm: Checks if Action A
satisfies Obligation Obl, under the given substitution Theta0]

If both policies coming from the source and from the target
of action (A Theta0) MIGHTGRANT action (A Theta0)
Then CHECKCONSTRAINTS given by the policies
Else Return True
    (obligation is satisfied)
\end{verbatim}
These two auxiliary procedures are the points of the other parts of the algorithm.
\end{paragraph}

\begin{paragraph}{Policy influence}
This part is the one in charge of deciding whether the involved policies might grant the action, and collecting some constraints in case it is, to recall the conditions under which the action might indeed be granted.

For a complex set of policies, each of them must be checked individually, and then their results must be taken and combined, according to the Belnap operators that are used to combine them in the locations where they are attached:
\begin{verbatim}
[MIGHTGRANT algorithm: Checks if there might be some
chance that action A is allowed by the policy Pol]

If Belnap Combination of every SINGLEPOLICY in Pol
might grant action A
Then Return True
    (might grant -> don't know yet about obligation)
Else Return False
    (never grants -> obligation is satisfied)
\end{verbatim}

Given a single involved policy, for checking its decision for a specific action, the first step is to find whether its cut can match the action, by finding some possible substitution. If no substitution is found, then the policy is not actually considered, allowing the action by default. In case there is some substitution, the applicability condition of the policy must be assessed, given the substitution found:
\begin{verbatim}
[SINGLEPOLICY algorithm: Checks if there might be some
chance that action A is allowed by the single aspect Asp]

If FINDSUBS (cut) (A) can find a substitution
Then CHECKCONDITION under the substitution found
    (don't know yet if it may grant)
Else Return True
    (might grant -> don't know yet about obligation)
(where cut is the cut of the Aspect Asp)
\end{verbatim}

If the applicability condition says the policy must be applied, then its recommendation will give the final decision of it:
\begin{verbatim}
[CHECKCONDITION algorithm: Checks if the aspect Asp
might be applied, under the given substitution Theta1]

If (cond Theta1) might be True
Then CHECKRECOMMENDATION of aspect Asp
    (don't know yet if it may grant)
Else Return True
    (might grant -> don't know yet about obligation)
\end{verbatim}

While checking the recommendation of a single policy, if it will never grant the action then the action can be trivially certified, otherwise the conditions under which the action might be granted have to be collected for later assessment:
\begin{verbatim}
[CHECKRECOMMENDATION algorithm: Checks if the aspect Asp
might grant action A, under the given substitution Theta1. If it
might, then enqueue new constraint]

If (rec Theta1) might be True
Then ENQUEUECONSTRAINT (rec Theta1)
    (might grant -> don't know yet about obligation)
Else Return True
    (never grants, always denies -> obligation is satisfied)
(where rec is the recommendation of aspect Asp)
(ENQUEUECONSTRAINT is not defined, but used within CHECKCONSTRAINTS)
\end{verbatim}
\end{paragraph}

\begin{paragraph}{Solving the constraints}
After determining which single policies would allow the action to occur, and under which conditions, we need to solve the set of constraints given by those conditions in order to determine if under those conditions the predicate of the obligation that we are analysing is satisfied. If that is the case, then we can certify the action, as it will always be the case that if the action is allowed, it is due to the recommendation of the involved policies, and because they form a set of conditions that imply the predicate, then the action is indeed secure:
\begin{verbatim}
[CHECKCONSTRAINTS algorithm: Checks if the combination
of recommendation from the policies imply the predicate Pred.
under the given substitution Theta0]

If set of Enqueued constraints implies (Pred Theta0)
Then True
    (implication holds -> obligation is satisfied)
Else False
    (implication does not hold -> obligation may not be satisfied)
\end{verbatim}
\end{paragraph}

\end{subsubsection}
\end{subsection}

\begin{subsection}{Model Checking our example}
\label{subsec:modelcheckingexample}
Let us illustrate the model checking procedure with our running example. The network is the one from Equation \ref{eq:netexample}, in Section \ref{sec:AspectKBL}, and let us take the Obligation from Equation Equation \ref{eq:oblig1}, in Section \ref{sec:ACTLv}.

The list of all the single actions described in the network is the following:
$$
\begin{array}{l}
Hansen::\textbf{read}(\texttt{Bob, PrivateNotes}, !content)@EHDB
\\
Hansen::\textbf{out}(\texttt{Bob, PrivateNotes}, content)@Olsen
\\
Olsen::\textbf{read}(\texttt{Bob, PrivateNotes}, !content)@EHDB
\end{array}
$$
The first two coming from the sub-network named $NetHansen$ and the last one from the one named $NetOlsen$.

Our alternative model checking suggests that we are to take each and every action from that list and check it separately. Let us start then with the first one, $Hansen::\textbf{read}(\texttt{Bob, PrivateNotes}, !content)@EHDB$, and proceed with the checking. The first task to do now is to check whether the action can be matched by the $cut$ of the Obligation, which is $\$u:\textbf{r}(-,\texttt{PrivateNotes},-)@EHDB$. The unifying and matching functions from Subsection \ref{subsec:smartmodelcheckingformal} help us to find the substitution $\theta = [\$u \mapsto Hansen]$. Certainly, $findsubs\ cut\ action = \theta_1 . \theta_2 . \theta_3$, where
$$
cut = \$u\ (-,\texttt{PrivateNotes},-)\ EHDB
$$
and
$$
action = Hansen\ (\texttt{Bob, PrivateNotes}, !content)\ EHDB,
$$
and where $\theta_1$, $\theta_2$ and $\theta_3$ are obtained by:
$$
\begin{array}{l}
\theta_1 =
\\
unify\ \$u\ Hansen = [\$u \mapsto Hansen]
\\[2ex]
\theta_2 =
\\
unifylist\ ((-,\texttt{PrivateNotes},-)\ \theta1)\ ((\texttt{Bob, PrivateNotes}, !content)\ \theta_1) =
\\
unifylist\ (-,\texttt{PrivateNotes},-)\ (\texttt{Bob, PrivateNotes}, !content) = id . id . id
\\[2ex]
\theta_3 =
\\
unify\ EHDB\ EHDB = id
\end{array}
$$

The second task to do is to determine the involved policies, and they are indeed $pol_{defaultDr}$ and $pol_{EHDB}$, the former coming from the source of the action while the latter from the target. We need to check whether they both might grant the action to occur. The action that must be checked is the one after applying the substitution just found, which in this case gives the very same action.

For each of the policies, then, the following is applied. First, it is checked whether the $cut$ of the policy can match the action. Indeed, taking, let us say, $pol_{EHDB}$, we find the substitutition $[\#u \mapsto Hansen]$. Second, it is checked whether the policy condition $cond$ might apply. Indeed, in our case it does as it is the constant $\TrueS$. Third, it is checked whether the recommendation $rec$ might evaluate to $\TrueB$. Indeed, applying the substitution we get
$$
(\textbf{test}(\texttt{Doctor}, \#u)@ROLES) [\#u \mapsto Hansen]
$$
which ends up being $\textbf{test}(\texttt{Doctor}, Hansen)@ROLES$, evaluating to $\TrueB$ as the tuple occurs in $NetRoles$.

That constraint ($\textbf{test}(\texttt{Doctor}, Hansen)@ROLES$) would then be enqueued for later solving when evaluating the predicate of the obligation. In our case, this reduces to a single checking as we use the same substitution obtained before ($\theta = [\$u \mapsto Hansen]$), and the predicate is then the same as the constraint. If we could prove the predicate given the constraints taken from the entire set of policies, then we could certify that particular action.

To certify the entire network we have to proceed the same with each and every action. Assume then, that we had taken the third action, Olsen::\textbf{read}(\texttt{Bob, PrivateNotes}, !content)@EHDB, and proceed in the same way as before. The substitution found is then $\theta = [\$u \mapsto Olsen]$. The involved policy is just $pol_{EHDB}$, as location $Olsen$ has the trivial policy $\TrueS$. Again, while checking if the policy applies, we end up finding that with the substitution $[\#u \mapsto Olsen]$ it might be applied, but it will not allow the action, as the substituted recommendation $(\textbf{test}(\texttt{Doctor}, \#u)@ROLES) [\#u \mapsto Olsen]$ ends up being $\textbf{test}(\texttt{Doctor}, Olsen)@ROLES$, which does not evaluate to $\TrueB$. Therefore, we can also certify the action, in this case a little more trivially (it will not be executed). If we had not have the policy, we could not have certified the action, as we could not have said that the action will not be executed, and then we should have proven the predicate of the obligation, without having any enqueued constraint (as before).
\end{subsection}

\end{section}

\begin{section}{Examples}
\label{sec:examples}
In this Section we show 2 more elaborated examples, in a case-study-like way. The first one is an extension of the examples in Sections \ref{subsec:examptinynetwork}, \ref{subsubsec:ACTLvexample} and \ref{subsec:modelcheckingexample}, with which we have been illustrating our developments. The second one comes from the Web community, and it is an abstraction of a known problem that has happened in actual systems, and we show that our formal framework is able to capture it and detect the insecurity. We aim at being able to detect also insecurities that are not known yet, by putting more detail to the example (or formalising some other from the same domain) and playing it out with our framework.

\begin{subsection}{A health care example}
In this example, based on \cite{AspectEHR} but with several changes and addings, we assume there is a Health Care Data Base (location $EHDB$) and some role-based access control rules depending on the roles of the process locations. There is a location $ROLES$ that defines this, and then some Doctors and Nurses. The general aim is that nurses are not allowed to get access to doctors' private notes.

The complete description of the system is the following:
$$
\begin{array}{l}
\textbf{Location } EHDB:
\\
\text{Tuples:}
\begin{array}{|l|l|}
\hline
patiendid & \text{the id of the patient}
\\ \hline
recordtype & \text{either \texttt{PrivateNotes} or \texttt{MedicalRecord}}
\\ \hline
author & \text{the Doctor id who created it}
\\ \hline
creationtime & \text{either \texttt{Past} or \texttt{Recent}} 
\\ \hline
data & \text{the data stored in the record}
\\ \hline
\end{array}
\\
\text{Policy: }(pol_{EHDB})
\\
\Aspectbegin
\textbf{test}(\texttt{Doctor}, \#u)@ROLES
\\
\textif \#u::\textbf{read}(-,\#type,-,-,-)@EHDB :
\\
\#type = \texttt{PrivateNotes}
\Aspectend[]
\\[5ex]
\textbf{Location } ROLES:
\\
\text{Tuples:}
\begin{array}{|l|l|}
\hline
roletype & \text{either \texttt{Doctor} or \texttt{Nurse}}
\\ \hline
id & \text{the id of the employee}
\\ \hline
\end{array}
\\
\text{Policy: $pol_{ROLE}$ Not explicited here to simplify the example, there should be policies}
\\
\text{to allow creation or deletion of records only by the Administrator}
\end{array}
$$

$$
\begin{array}{l}
\textbf{For each DOCTOR location in the system: (for example } Smith)
\\
\text{Policy:} (pol_{defaultDr})
\\
\Aspectbegin
\textbf{test}(\texttt{Doctor}, \#target)@ROLES
\\
\textif \#u::\textbf{out}(-,\texttt{PrivateNotes},-,-,-)@\#target :
\\
\lnot \#target = EHDB
\Aspectend[]
\end{array}
$$

The global property we want to enforce is that we don't want private notes in any nurse's location. A bit closer to the Logics we have, we can (still informally) express the property by: any \textbf{read} of PrivateNotes should be only done by a Doctor and any \textbf{out} of PrivateNotes should not be done to any Nurse's location.
Formally, that can be written:
\begin{equation}
\label{eq:prop1}
AG_{\{\$u:\textbf{r}(-,\texttt{PrivateNotes},-,-,-)@EHDB \}}\text{test}(\texttt{Doctor},\$u)@ROLES
\end{equation}
and the following three for the second informal property:
\begin{equation}
\label{eq:prop2a}
AG_{\{\$u:\textbf{o}(-,\texttt{PrivateNotes},-,-,-)@Olsen \}}\text{test}(\texttt{Doctor},Olsen)@ROLES
\end{equation}
and
\begin{equation}
\label{eq:prop2b}
AG_{\{\$u:\textbf{o}(-,\texttt{PrivateNotes},-,-,-)@Smith \}}\text{test}(\texttt{Doctor},Smith)@ROLES
\end{equation}
and
\begin{equation}
\label{eq:prop2c}
AG_{\{\$u:\textbf{o}(-,\texttt{PrivateNotes},-,-,-)@Hansen \}}\text{test}(\texttt{Doctor},Hansen)@ROLES
\end{equation}
due to the syntactic restriction that we still have.

Now assume we have the following data already stored:
$$
\begin{array}{l}
EHDB::^{pol_{EHDB}}<\texttt{Alice, MedicalRecord, Hansen, Past, alicetext}> \netpar
\\
EHDB::^{pol_{EHDB}}<\texttt{Bob, PrivateNotes, Smith, Recent, bobtext}> \netpar
\\
ROLES::^{pol_{ROLE}}<\texttt{Doctor, Hansen}> \netpar
\\
ROLES::^{pol_{ROLE}}<\texttt{Doctor, Smith}> \netpar
\\
ROLES::^{pol_{ROLE}}<\texttt{Nurse, Olsen}>
\end{array}
$$
and let's call that networt $Data$.

Now assume three separate examples of process locations running in parallel with it. Therefore we will have:

\noindent \textbf{Example1:}
$$
Data \netpar HansenGood \netpar OlsenGood
$$
\textbf{Example2:}
$$
Data \netpar HansenBad \netpar OlsenGood
$$
\textbf{Example3:}
$$
Data \netpar HansenGood \netpar OlsenBad
$$

Where we can define the process locations as follows:
$$
\begin{array}{l}
HansenGood =
\\
Hansen::^{pol_{defaultDr}}
\\
\quad \textbf{read}(\texttt{Alice, MedicalRecord, Hansen, Past}, !content)@EHDB .
\\
\quad \textbf{out}(\texttt{Alice, MedicalRecord, Hansen, Past}, content)@Olsen .
\\
\quad \textbf{read}(\texttt{Bob, PrivateNotes, Smith, Recent}, !content)@EHDB .
\\
\quad \textbf{out}(\texttt{Bob, PrivateNotes, Smith, Recent}, content)@Hansen .
\\
\quad \nil
\end{array}
$$
$$
\begin{array}{l}
HansenBad =
\\
Hansen::^{pol_{defaultDr}}
\\
\quad \textbf{read}(\texttt{Bob, PrivateNotes, Smith, Recent}, !content)@EHDB .
\\
\quad \textbf{out}(\texttt{Bob, PrivateNotes, Smith, Recent}, content)@Olsen .
\\
\quad \nil
\end{array}
$$
$$
\begin{array}{l}
OlsenGood =
\\
Olsen::^{\TrueS}
\\
\quad \textbf{read}(\texttt{Alice, MedicalRecord, Hansen, Past}, !content)@EHDB .
\\
\quad \nil
\end{array}
$$
$$
\begin{array}{l}
OlsenBad =
\\
Olsen::^{\TrueS}
\\
\quad \textbf{read}(\texttt{Bob, PrivateNotes, Smith, Recent}, !content)@EHDB .
\\
\quad \nil
\end{array}
$$

\begin{subsubsection}{Model Checking and results}
Now, a model checking done in the entire LTS induced by the Semantics for Example1, Example2 and Example3 will indeed prove that the three of them satisfy both properties of Equations \ref{eq:prop1} and \ref{eq:prop2a} (and of course \ref{eq:prop2b} and \ref{eq:prop2c}). Moreover, the alternative way of model checking defined in Section \ref{sec:SmartModelChecking} is also able to prove that.

Furthermore, if we take out all the policies and just attach \TrueS\ to each location (as it is done, for instance, in $Olsen$ location), then only Example1 will satisfy both properties. Example2 will fail to satisfy property of Equation \ref{eq:prop2a} and Example3 will do so with property of Equation \ref{eq:prop1}. Moreover, our alternative way of model checking is also able to detect that these properties are not satisfied in each case.
\end{subsubsection}

\end{subsection}

\begin{subsection}{Third-party cookies}
In the Web community, it has been established a mechanism for a Server to know that a Client had already contacted it before, in order to provide better answers to its queries. This mechanism is based on the so called \emph{cookies}, and it basically stores in the Client side some information the Server wants to keep about the Client for future reference. In later requests, the Client automatically sends that information to the Server, and then this latter can provide a customised answer.

However, a twist has been made to this mechanism, with which \emph{partners} of the Server can also store information in the Client side, without this latter even having visited the partner webpage. The Server is involved in this procedure, and it basically stores what it is called as \emph{third-party cookie}, with information about its partner (commonly advertisers that pay the Server to do that). Later, when the Client connects to some other Server, which has also an agreement with the same third-party, the Client will automatically send the information stored in the cookie to the third party, as this is automatically done whenever is requested from the Server side and the Client has the data.

In this Subsection, we abstract this behaviour and formalise it in our \textbf{AspectKBL} language, in order to prove the Global Property that the Client will be secure to avoid receiving any third-party cookie.

\begin{subsubsection}{Our abstracted desciption}
When a Client establishes a connection with a Server, this latter one sets in the Client some information called ``cookie'' that the Client has to resend every time he wants to establish further connections with the same Server for a given period of time. Again and again, the information contained in a cookie for a given Server can be increased, due to further connections and new pairs name-value set by the Server inside the Client.

In some cases, Servers can set into the Client cookies ``owned'' by another website (mainly for advertisement purposes). The website that owns a cookie is the one that will receive it from the Client every time the Client connects to it. Therefore, if a given website ``domain1'' has some advertisement from a Company Ad, and another website ``domain2'' has the same type of advertisement, when Client connects to domain1 he receives the cookies set by domain1, but also a cookie set by domain1 but owned by Ad. Later, when Client connects to domain2, because some of the content comes from Ad, it is like establishing a ÒparallelÓ connection to Ad, and then the cookie is indeed sent to Ad, and this one learns that Client had previously connected to domain1 (before this current connection to domain2).

These mechanisms are depicted in Figure \ref{fig:thirdpartycookie}, showing the insecure operation and its further consequence, which leaks information and prevents privacy.

\begin{figure}[ht]
  \centering
  \includegraphics[width=1\textwidth]{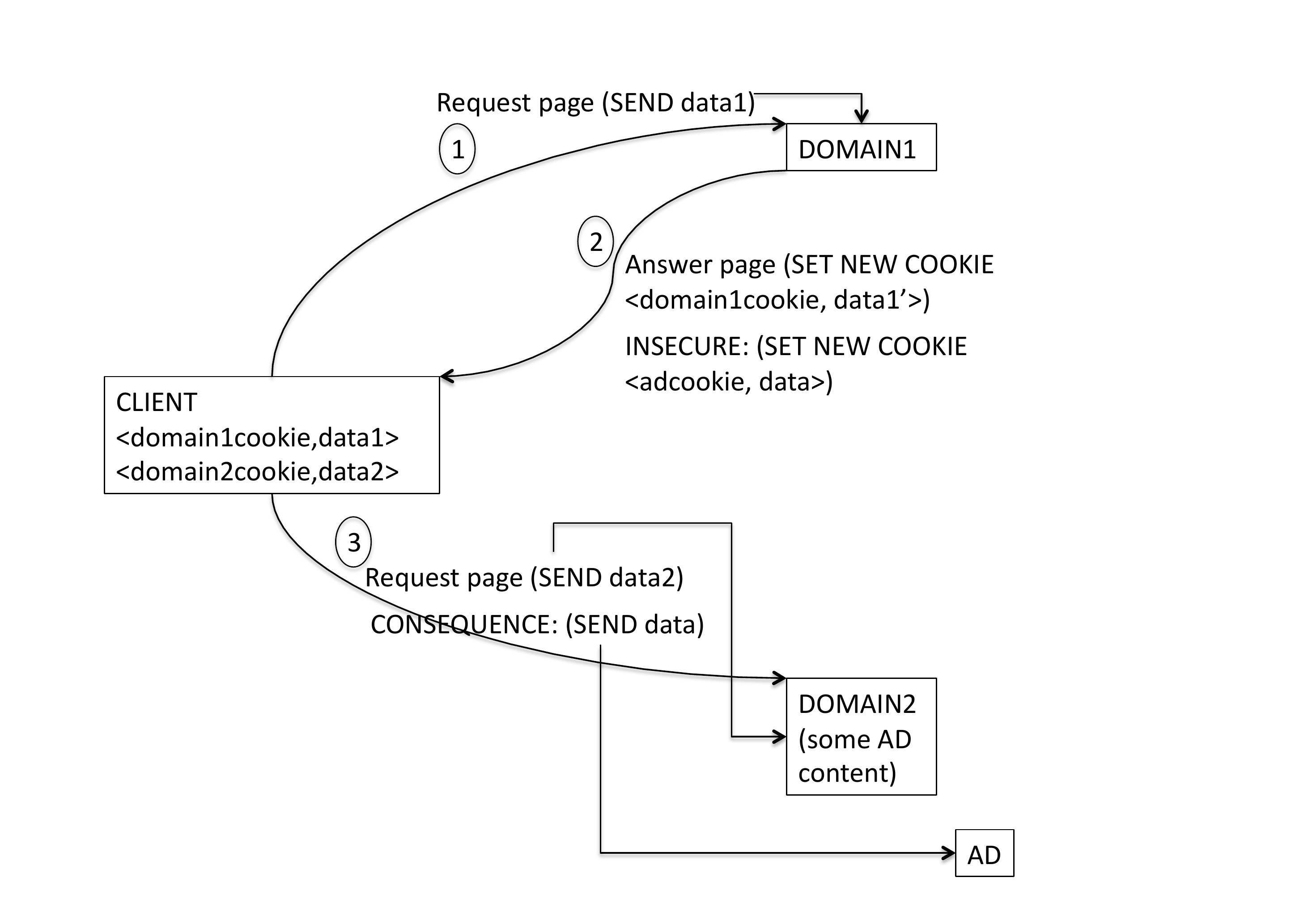}
  \caption{Third-party cookie abstracted.}
\label{fig:thirdpartycookie}
\end{figure}

To avoid this, one can set in the Client side, a security policy that does not allow third-party cookies. In the Web community this is known as the \emph{Same-origin policy}, and it is part of the Platform for Privacy Preferences Project (P3P).
\end{subsubsection}

\begin{subsubsection}{Formal model}
When a Client connects to a Server, either it is the first time and so no cookie is stored, so the Client just connects sending nothing, or perhaps it is not the first time and then some cookiedata must be sent. For this, the Client reads the data from its own storage, and then sends it to the Server, appending the possibly new data received for future use:
$$
\begin{array}{llll}
Client::  & \multicolumn{3}{l}{(\ \textbf{out}(\texttt{connect}, self, \texttt{nocookie})@Server\ .\ \emptyset} \\
& & + & \\
& \multicolumn{3}{l}{\textbf{read}(\texttt{Server}, !cookiedata)@self\ .} \\
& \multicolumn{3}{l}{\textbf{out}(\texttt{connect}, self, cookiedata)@Server\ .} \\
& \multicolumn{3}{l}{APPENDNEWCOOKIE\ )} \\
\end{array}
$$
The appending procedure has to be done stepwise, as our primitives only allow us to read or write, and erase while reading. Therefore, the Client reads the two pieces of cookies it has, erasing them, and then writes a new one with the combination of both:
$$
\begin{array}{ll}
APPENDNEWCOOKIE \equiv & \textbf{in}(\texttt{Server}, !data1)@self\ . \\
& \textbf{in}(\texttt{Server}, !data2)@self\ . \\
& \textbf{out}(\texttt{Server}, data1.data2)@self\ .\ \emptyset
\end{array}
$$

After the Server has received a connection, it decides to serve it by taking it from its space of pending connections. It will take the Client information and the cookie, and after that it might send back to the same Client some extra information for an updated cookie:
$$
\begin{array}{ll}
Server:: & \textbf{in}(\texttt{connect}, !Cli, !cookiedata)@self\ . \\
& \textbf{out}(self, \texttt{morecookiedata})@Cli\ .\ \emptyset \\
\end{array}
$$

On the other side, if the Server is one that tries to send third-party cookies, it will behave slightly different:
$$
\begin{array}{ll}
BadServer:: & \textbf{in}(\texttt{connect}, !Cli, !cookiedata)@self\ . \\
& \textbf{out}(\texttt{thirdparty}, \texttt{morecookiedata})@Cli\ .\ \emptyset \\
\end{array}
$$

The security policy that we could attach to the Client in order to avoid receiving such cookies from the bad latter Server, but still receiving those from the former one, is the following:
$$
\begin{array}{l}
pol_{EHDB} =
\Aspectbegin
\#s = \texttt{Server}
\\
\textif \#s::\textbf{out}(\texttt{Server},-)@self :
\\
\TrueS
\Aspectend[]
\end{array}
$$

Now, with the entire system, no matter which of the two Servers with put in parallel with the Client, we should be able to prove the following Global Property, which establishes that whenever we traverse a transition in which someone is writing a pair with the name of a Server and some extra data to the Client location, then that someone is indeed the very Server:

\begin{equation}
\label{eq:obligcookie}
AG_{\{\$s:\textbf{o}(\texttt{Server},\$data)@Client \}}\$s = \texttt{Server}
\end{equation}
\end{subsubsection}
\end{subsection}

\end{section}

\begin{section}{Conclusion}
\label{sec:conclusion}
In this report, we have presented a framework with which it is possible to describe distributed systems consisting of distributed processes and memory location. Furthermore, the possibility of attaching security policies to the locations provides access control methods for the interactions among them.

With that, one can induce a Labelled Transition System (LTS), and over it one can perform model checking tasks. In this work we have seen how to perform this tasks in an alternative way, using a twist of the framework and a particular version of Action Computation Tree Logic. We have proposed an algorithm that smartly model checks the systems developed within the framework, and provides guarantees about the desired global security properties. Everything was illustrated with detailed examples.

Our alternative approach to model check relies on some properties given by the framework in order to guarantee conditions that will hold in all executions, without the need to induce the entire LTS. With this, the model checking is performed in a fast and safe way. We discussed the limitations of our approach, and pinpointed some weak points or restrictions we are having right now. The development of a tool is described, and its implementation is left for the Appendix.

With this work, we aim at showing that a global security property could indeed be achieved by localising partial access control policies throughout the system, keeping everything well-organised, and without having the need of any central controller. The advantages of this are the possibility of more realistic implementation of security policies, as the policies are located where they can indeed be controlled by the implementor. We would allow, then, to design distributed systems where the policies attached to the different parts of the system can indeed be specified and analysed, and if the analysis gives the expected results, then the implementation of them could be straightforwardly done.

If this work comes through, the possible applications of our approach go even beyond the security domain, as we envision the possibility of combining policies for decision support systems, and any other situation where possible conflicts or unknown information may occur, while aiming to a certain global decision.

\begin{subsection}{Future work}
Throughout this document, some parts were mentioned to be partially done or missing. In this Subsection they are clarified and explicitly recalled, and some other ideas for future work are also pointed.

Our language is Turing-complete, though for doing our analysis we syntactically restrict it to have only finite paths, by forbidding the use of the replication operator $*$. In the future, we should consider the entire language, and find some necessary or sufficient conditions for our analyses to remain valid.

Another restriction we are having right now, is that the target location of the $cut$ in the obligations must be a constant. This allows us to fix the policies involved in the granting of the relevant actions. We might relax this restriction, allowing the use of variables, and in the cases were the action coming from the net is also a variable, so no grounding is possible before starting to consider the policies, we should then use some Static Analysis in order to restrict the set of possible policies. We then would need to do the analysis for each policy of the set separately, in order to see if with the entire set we can certify the action, or otherwise detecting for which policies (thereby identifying for which values of the target location) the action cannot be certified. By doing this, we could then iteratively improve the precision of the Static Analysis, to see if those values are indeed possible or they were considered due to the first over-approximation taken.

Staying in the imprecision assessment, we might also come up with examples that are secure but are not certified by our analysis, thereby showing that it is indeed an over-approximation. Alternatively, we might be able to prove that our analysis is actually precise. We have some clues, but this has to be further developed. Certainly, if we find out that our analysis does some over-approximation, having the examples could help us to define ways to give more precision to it.

On the opposite side, as we do not want our analysis to be an under-approximation, as we want it to be safely certifying actions, we need to prove that this is indeed the case. Certainly, to be sure that every network that is certified by our analysis is indeed secure, a proof of correctness is necessary. This proof should show that the way we do our alternative approach of model checking (including the inversion in the order the substitutions are taken) is correct with respect to the Semantics of our language.

We have illustrated our approach for simple examples, and indeed they can certainly be currently solved. However, although we have mentioned the possibility of coming up with sets of constraints, given by the recommendations of the policies that might allow a certain action, therefore being able to prove the implication from them to the predicate we might be analysing, we are actually not doing this for the moment. Certainly, the possibility of using some constraint solver for pursuing this task is being considered. For the moment, we are only taking the power of the substitutions done using our approach, and if with them we are able to certify the action, then we do it. Otherwise, we just answer that the action cannot be certified, although it might be just a matter of Logically understanding/arranging the certain constraints for certifying it.

Considering that, we estimated that in some case it could still be impossible to come up with a definite answer that some action is not certified. Indeed, even if the set of constraints is properly considered and manipulated, if we do not have some constraint that is opposite to the predicate we are aiming at proving, then we should not completely rule the action out, as we are not sure whether it will perform some insecure behaviour according to our aims. Therefore, assessing the possibility of having a 3-valued answer could be a direction to follow. A certainly certified network would thereby obtain the answer $\TrueB$, and a network that is clearly insecure would obtain the answer $\FalseB$. There might be other networks that, due to some imprecision in our analysis, or to some complicated structures that are beyond the power of the current state of the art of our tool, can obtain the answer $Unknown$.

Using the examples we have described in Section \ref{sec:examples}, we shall come up with improvements to our theory, and show its usefulness as well. On one side, aiming at detecting some other insecurities in the examples, both known ones and hopefully also some unknown one, would be one good goal. For doing this, it would be necessary  to lower the abstraction level and give more details to the examples. If we can achieve this, then the usefulness of our approach will be further assessed. On the other side, we can use the examples to have an \emph{example-guided theory improvement}. Indeed, we aim at having more powerful Logics for the global properties, and instead of just coming up with ideas that make our Logics closer to some existing one, we aim at using our examples to come up with useful properties that would be nice to describe about them, and extending our Logics towards being able to express, and to later analyse, such properties. In particular, the possibility of considering \emph{liveness} properties as well, as opposed to safety properties as we are doing right now, is on our agenda. Moreover, the continuation processes shall be considered as well, as they are indeed an important part of our policies, and since we are working with a process algebra they are indeed first class objects, so it would be very interesting to make good use of them for expressing properties.

Finally, and moving specifically to the implementation of the tool, not just the constraint solving part is missing. Indeed, the fact that the policies that are relevant for a certain action are combined using the liberal approach is indeed considered from the basic implementation of the tool. So the possibility of dealing with other approaches could be assessed, and some modifications of the tool shall be done in order to make it more flexible. Moreover, the combination of policies coming from a single location could follow any Belnap operator, as that depends on how they are attached to the location. Indeed, dealing with this in a more abstract way is also necessary, as currently they are combined by collecting their answers and finding a final one also considering that $\bot$ would finally allow and $\top$ would not. Certainly, having two levels of abstraction and keeping in mind within the tool that we are dealing with 4-valued elements is necessary, in order to, only in the end, map it into 2-valued Logic.
\end{subsection}
\end{section}

%% ------------------------------------------------ BIBLIOGRAPHY ------------------------------------------------ %%

%The style defines how it is presented:
\bibliographystyle{plain}
%The name of the file where the bibtex items are:
\bibliography{bibtex}

%% ------------------------------------------------ APPENDIX ------------------------------------------------ %%

\appendix
\begin{section}{Appendix: Implementation of a tool}
\label{sec:appendix}
This Appendix describes with some degree of detail how a tool is being built to deal with the alternative approach to model checking described in Section \ref{sec:SmartModelChecking} for the framework discussed in this whole report. Some algorithm design issues are discussed in high level first, then how to deal with some specific issues, and finally some Haskell implementation parts are shown.

\begin{subsection}{The tool}
\label{subsec:toolpseudocode}
Currently, the tool does not perform the constraint analyses described in the algorithm part of Subsection \ref{subsec:smartmodelcheckingformal}. It indeed assumes the recommendations and the predicates in the obligations are straightforward enough that with the power of the substitutions found it can be determined whether the policies imply the obligation. It is the topic of near future work to improve this restriction.

For the moment, the algorithm described in Subsection \ref{subsec:smartmodelcheckingformal} can then be implemented in pseudo-code as in Table \ref{tab:algorithm_modelchecker}, omitting the details of how the \texttt{mightgrant} is implemented.

\begin{table}[t]
\begin{verbatim}
Algorithm: check [network] [obligation]

for each action [act] occurring in some process at some location of [network]
  if [act] might match the cut of [obligation] producing substitution [theta0]
  then
    if [act] substituted by [theta0] might be granted by the involved policies
    then
      if predicate of [obligation] substituted by [theta0] might be False
      then
        Return False /* the obligation may not be satisfied */
      else
        Return True /* the obligation is satisfied */
    else
      Return True /* the obligation is satisfied */
  else
    Return True /* the obligation is satisfied */
\end{verbatim}
\caption{Algorithm for smartly model check.}
\label{tab:algorithm_modelchecker}
\end{table}

The substitutions that are done for matching the $cut$ of the obligation with the action, and the $cut$ of the involved policies with the action, are not then applied in the same places. Indeed, the substitution obtained with the obligation is the one that makes us know in which situations the action is relevant for our obligation. Therefore, such relevance is then considered while applying the substitution to the action before checking if the involved policies would allow it. The substitutions that might appear from the policies are only applied within the policy, to check the $cond$ and the $rec$. Although it might seem strange, since the policies are the actual granters of actions thereby generating transitions for the LTS which later could be matched by the obligation, this is a proper approach, since the $rec$ is the actual influencer of whether the predicate could be satisfied. This shall be formally proved in future work.

Besides, the unification procedures are straightforwardly implemented according to Subsection \ref{subsec:smartmodelcheckingformal}. The \texttt{mightgrant}, in its turn, is implemented stepwise by analysing the different parts of each involved aspect, and then a combination of them is done using Belnap Logic, for finally obtaining which the decision might be. In cases where no definite decision can be estimated, it is very difficult to deal with without using Static Analysis, as for instance if some recommendation might be $\TrueB$ under certain conditions, but it might also be $\FalseB$, then the negation of that recommendation might be $\TrueB$ of course. Then, either with or without negation nothing can be decided, due to some variables that are not ground.
\end{subsection}

\begin{subsection}{Dealing with implementation}
The tool implements what is stated in the previous Subsection. For this, a Haskell prototype is being developed, which currently consists of three main modules: \texttt{DataStructures.hs}, \texttt{Substitution.hs} and \texttt{Algorithm.hs}. They (and some other auxiliary ones) are organised according to Figure \ref{fig:module_structure}.

\begin{figure}[ht]
  \centering
  \includegraphics[width=1\textwidth]{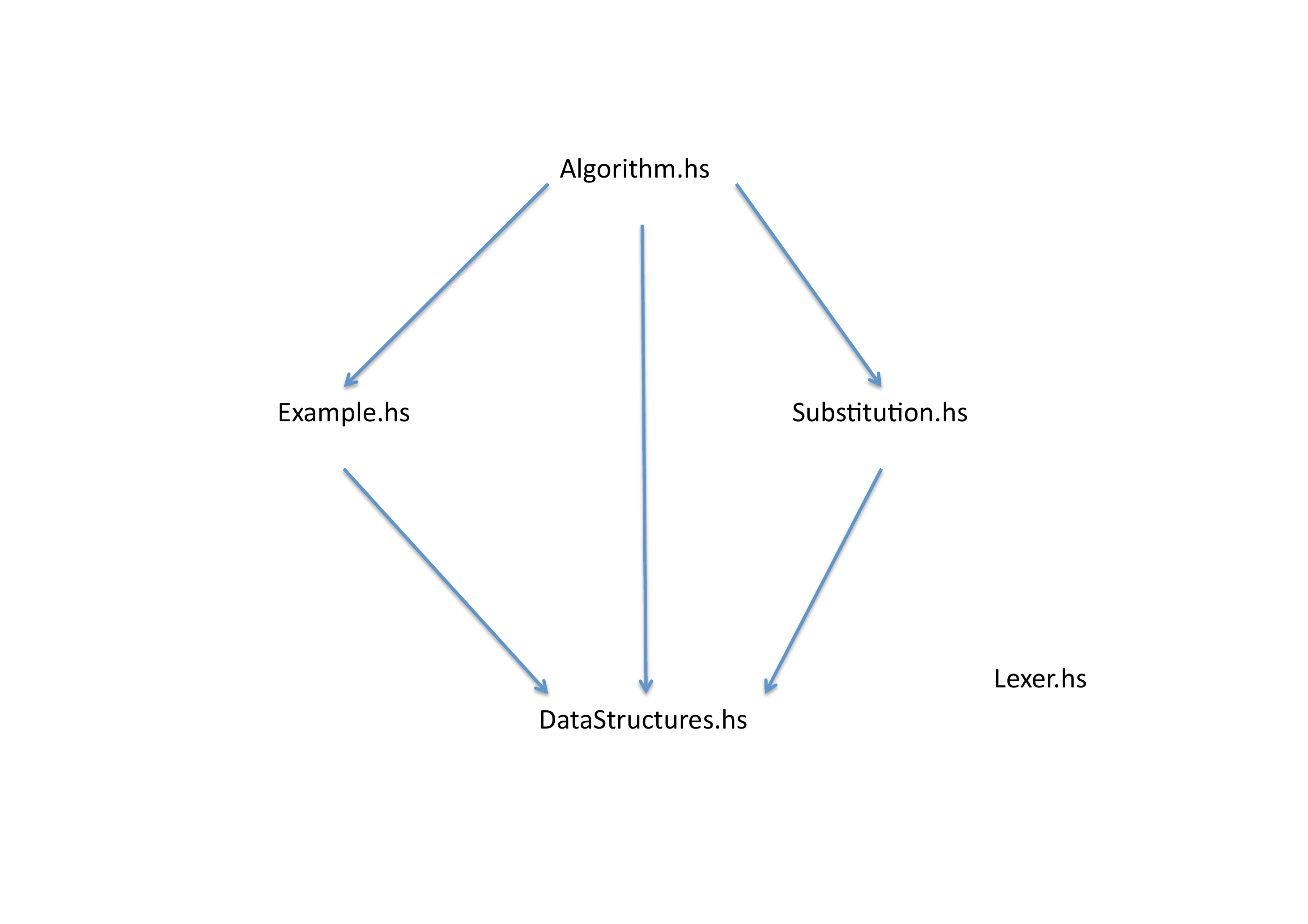}
  \caption{Module structure of tool.}
\label{fig:module_structure}
\end{figure}

The first one is in charge of defining the internal structures that will be used for keeping the abstract syntax tree of the examples the tool has to deal with. It mainly follows the abstract syntax of the formal AspectKBL language, with some slight modification due to tags or so. It also defines the internal structures for keeping the global properties, and some auxiliary functions for accessing the structures. In particular, one of the main functions is \texttt{takelactions}, which is in charge of obtaining the list of all the actions that have to be certified for security, by recursively entering all the locations in the network. The actions are listed including the location that holds each of them, as this is a necessary piece of information that is needed, because it defines some of the policies that will govern the action.

The pretty printing of these structures is also (for the moment just partly) defined in this module. The structures of Augmented Locations, which will later be used by the substitutions, is also defined here (perhaps the wrong place, though more easily findable during development process).

Some of the structures of the language are not completely defined. In some cases this is just due to lack of time and to incremental building of the tool, so they are appointed to be developed soon, for instance combination of recommendations within a single aspect, or quantifiers in the predicate of the global property. On the other hand, there are structures that are not completely defined due to lack of theory on how to deal with them when doing the over-approximation. For instance, if an aspect ÒmightÓ trap an action, and in such cases the action is granted, then we could check the implication from the recommendation to the predicate of the global property. However, if the aspect is modified to have the negation of the condition (or the recommendation), it ÒmightÓ still be trapping, because with the non-negated we are not sure it always traps (it was also a might and not a must). Maybe 4-valued logic could also be used to deal with the over-approximation, or probabilities (not really considered), or really distinguish when the analysis has completely grounded variables and so each conclusion can be ÒdefiniteÓ instead of ÒpossibleÓ.

The binding of variables is also another issue that is still to be clearly established how to implement.

The second module of the tool, \texttt{Substitution.hs}, relies on the module \texttt{DataStructures.hs}, and is in charge of defining the necessary functions for obtaining a substitution that unifies labels ($cut$ from global properties) and actions or pointcuts ($cut$ from aspects) and actions. It also defines the functions for applying the substitutions to any possible entity that is plausible to have variables (action, processes, networks, policies, conditions, recommendations, predicates, etc.).

A naming convention for the variables is used within the prototype, for the internal handling of substitutions and avoiding name crashes or free variables. Variables defined in a process could have any name, variables defined in an aspect/policy must start with the character hash (`\#'), and variables defined in an obligation must start with the character dollar (`\$'). For instance, if a variable occurs in an obligation without the character dollar, it means it might be a free variable, or referring to a variable of the network, and this is strictly forbidden. Anyway, the tool is capable of dealing with these, because for internal computations while applying substitutions it might be some points where this is the case. That is why one very important point for using the tool is to make sure that when calling the main function, this does not happen.

Finally, the module \texttt{Algorithm.hs}, which relies in both \texttt{Datastructures.hs} and \texttt{Substitution.hs} modules, defines the model checker itself, the function that given a network and an obligation (a global property) will try to certify it. It basically makes use of the function \texttt{takelactions} (defined in the module Datastructes) and then checks each of them separately. The network is certified if and only if all of them can be certified.

For certifying an action, it implements the procedure described in the previous Subsection, of first looking for a substitution that could unify the action with the label of the obligation. If no substitution is found then the action is automatically certified. If a substitution is found, then it checks for each and every aspect that will govern the action if it might grant the action. If at least one aspect will never grant the action, then the action is certified. Otherwise, if the action might be granted, then it is checked whether the predicate might be False in some cases (and as established in the previous Subsection, for the moment no system of constraints is used for this, but it is just done by relying on the power of substitutions). If the predicate will never be False, then the action is certified.

If in the end the action could not be certified, the answer of the model checker is False, but this might mean the network is not secure, or it might also mean that the procedure was not precise enough and no certifying way was found.

The tool has also implemented a module \texttt{Lexer.hs}, which appears to work properly (but currently just for a subset of the language), but it cannot be used yet because it is also necessary another module Parser, which has not been implemented due to lack of time and concentration in other issues. One very important point that will have to be included in the parser, is that for every variable occurring in the obligation an starting with a dollar symbol (`\$') must be added, and same for variables in the aspects with the hash symbol (`\#'), due to the issue discussed before. This has to be done while constructing the abstract syntax tree.
\end{subsection}

\begin{subsection}{Some code}
In the module \texttt{Substitution.hs}, the necessary functions for finding and performing substitutions are implemented. The unification functions are almost straightforwardly implemented following the formal model of Subsection \ref{subsec:smartmodelcheckingformal}, except for some little modifications to make them fit in Haskell code and type system.

Moreover, the main function in charge of finding substitutions is \texttt{findsubs}, which indeed works for any pair of entities, but indeed we are interested in the cases of $cut$ from the global property against action, and $cut$ of aspect against action. This first step is done by ad-hoc functions that split each entity into parts and get rid of the type of the $cut$, so we are able to call the function \texttt{findsubs}. This latter is defined as follows:
\begin{verbatim}
findsubs :: Oper -> Oper -> AugLoc -> AugLoc -> [AugLoc] ->
    [AugLoc] -> AugLoc -> AugLoc -> AugLoc -> Maybe Substitution
findsubs op1 op2 alsrc1 alsrc2 als1 als2 altgt1 altgt2 ignored =
    if theta3 /= Nothing
    then Just ((fromJust theta1) ++ (fromJust theta2) ++
        (fromJust theta3))
    else Nothing
    where
        theta1 =
            if op1 == op2
            then unifywithignored alsrc1 alsrc2 ignored
            else Nothing
        theta2 =
            if theta1 /= Nothing
            then
                if length als1 == length als2
                then unifyseqwithignored
                    (applysubs (fromJust theta1) als1)
                    (applysubs (fromJust theta1) als2)
                    ignored
                else Nothing
            else Nothing
        theta3 =
            if theta2 /= Nothing
            then unifywithignored
                (applysubs (fromJust theta2)
                    (applysubs (fromJust theta1) altgt1))
                (applysubs (fromJust theta2)
                    (applysubs (fromJust theta1) altgt2))
                ignored
            else Nothing
\end{verbatim}

Once a substitution is obtained, it can be used for actually substituting any entity defined within the network. Indeed, this is already illustrated even within some parts of the \texttt{findsubs}. The function is a polymorphic one, and its main definition is as follows:
\begin{verbatim}
class Substituible a where
    applysubs :: Substitution -> a -> a
\end{verbatim}
Then, it has a specific implementation for each type of entity we are interested in making it \texttt{Substituible}, and certainly all those defined within our framework have their implementation.

\begin{paragraph}{The Haskell algorithm}
Finally, the algorithm for performing the smart model checking is implemented in Haskell following the pseudo-code from Subsection \ref{subsec:toolpseudocode}. It still misses some constraint solving routine to deal with complex cases, but for simpler ones it makes use of the power of substitutions to come up with an answer in a safe way. The code is the following:

\begin{verbatim}
checksingleaction :: Label -> Predicate -> Network -> LocAction -> Bool
checksingleaction lab pred net lact =
    if theta0 /= Nothing    -- label might match action
    then
        if mightbegranted    -- policies might grant the action
        then
            if predicatemightbefalse    -- predicate might be false
            then False
            else True
        else True
    else True
    where
        theta0 = findsubslabel lab lact
            -- substitution that makes the action match the label
            -- of the obligation
        lact0 = applysubs (fromJust theta0) lact
            -- if action matched the label, then it must be substituted
            -- to evaluate the rest
        net0 = applysubs (fromJust theta0) net
            -- this might be useless, but just to have some standard,
            -- the network is also substituted by the theta0 found
        pol1 = (polsrc net0 lact0)
            -- this is the policy governing the action,
            -- coming from the source of the action
        pol2 = (poltgt net0 lact0)
            -- this is the policy governing the action,
            -- coming from the source of the action
        mightbegranted =
            if and [mightgrantpol1, mightgrantpol2]
            then True
            else False
        mightgrantpol1 = mightgrant pol1 lact0 net0
        mightgrantpol2 = mightgrant pol2 lact0 net0
        predicatemightbefalse =
            -- HERE is the place where the constraints should be applied,
            -- instead of just relying on the power of the substitution
                if evaluatealwaysTrue (applysubs (fromJust theta0) pred) net0
                then False
                else True
\end{verbatim}

Certainly, that is the code for checking whether a specific action occurring in the network satisfies the global property. For the whole network to satisfy it, all its individual actions must do so, as observed in Subsection \ref{subsec:smartmodelcheckinginformal}, but this is a simple mapping in Haskell and it is performed by the following code:
\begin{verbatim}
checkexample :: Network -> Obligation -> Bool
checkexample net obl = and xs
    where
        xs = map (checksingleaction lab pred net) tls
        lab = label obl
        pred = predicate obl
        tls = takelactions net
\end{verbatim}

Finally, for knowing whether a security policy might grant an individual action according to the information we have so far (just substitutions to make the action relevant for the global property), we first divide the policy into single aspects whose decisions are later to be combined using Belnap Logic, and apply the following function to each of them:
\begin{verbatim}
mightgranttrapping :: Aspect -> LocAction -> Network -> Bool
mightgranttrapping (Advice rec cut cond) lact net =
    if theta /= Nothing
    then
        if conditionmightapply (applysubs (fromJust theta) cond)
        then
            if recommendationalwaysdenies (applysubs (fromJust theta) rec) net
            then False
            else True
        else True
    else True
    where
        theta = findsubspointcut cut lact
\end{verbatim}

Certainly, the implementation of functions such as \texttt{conditionmightapply} and \texttt{recommendationalwaysdenies} is done but for complex cases some extra theory must be developed, since as discussed for instance in Subsection \ref{subsec:toolpseudocode}, cases with negation or other operators that might change completely our ``might'' approximation, can leave us in a problem due to large imprecision.
\end{paragraph}

\end{subsection}

\end{section}

\end{document}